\documentclass[1p]{elsarticle}

\usepackage{lineno,hyperref}
\modulolinenumbers[5]
\usepackage{fix-cm}
\usepackage{mathtools}
\usepackage{amsmath,amsthm}
\usepackage{graphicx}
\usepackage{latexsym,enumerate,amssymb,amsbsy}
\usepackage{graphics,color}
\usepackage{verbatim}
\usepackage{url}
\usepackage{dblfloatfix}
\hyphenation{op-tical net-works semi-conduc-tor}
\usepackage{mathptmx}
\usepackage{tikz-network}
\usepackage{booktabs}
\usepackage{xfrac}
\usepackage{amsfonts}
\usepackage{stmaryrd}
\usepackage{algorithm}
\usepackage[noend]{algpseudocode}
\usepackage{url}
\usepackage{colortbl}
\usepackage{adjustbox}
\usepackage{array,longtable}
\usepackage{caption}
\usepackage{subcaption}
\usepackage{arydshln}

\newtheorem{proposition}{Proposition}
\newtheorem{theorem}[proposition]{Theorem}
\newtheorem{corollary}[proposition]{Corollary}
\newtheorem{lemma}[proposition]{Lemma}

\newtheorem{example}{Example}

\newtheorem{remark}{Remark}

\newtheorem{definition}{Definition}

\def\F{{\mathbb{F}}}

\def\Z{{\mathbb{Z}}}

\def\0{{\mathbf{0}}}
\def\1{{\mathbf{1}}}

\def\mdc{$\Gamma({\bf N}, S)$  }

\DeclarePairedDelimiter{\floor}{\lfloor}{\rfloor}

\newcommand\dsb[1]{\llbracket #1 \rrbracket}

\newcommand{\blue}[1]{{\color{blue}#1}}

\allowdisplaybreaks

\begin{document}
	
	\begin{frontmatter}
		\title{New Qubit Codes from Multidimensional Circulant Graphs \tnoteref{t1}}
		\tnotetext[t1]{This material is based upon work supported by the National Science Foundation under Grant DMS-2243991.}

		\author[tamuc]{Padmapani Seneviratne\corref{cor1}}
		\ead{Padmapani.Seneviratne@tamuc.edu}	
		
		\author[TC]{Hannah Cuff\fnref{fn1}}
		\ead{hannah.cuff@trincoll.edu}
		
		\author[CU]{Alexandra Koletsos\fnref{fn1}}
		\ead{ak4749@columbia.edu}

		\author[SC]{Kerry Seekamp\fnref{fn1}}
		\ead{kseekamp@smith.edu}

		\author[PU]{Adrian Thananopavarn\fnref{fn1}}
		\ead{adrianpt@princeton.edu}

		\address[tamuc]{Department of Mathematics, Texas A\&M University-Commerce,\\
			2600 South Neal Street, Commerce, TX 75428.}
		
		\address[TC]{Department of Mathematics, Trinity College, 300 Summit Street Hartford, CT 06106.} 
		
		\address[CU]{Department of Mathematics, Columbia University, 2990 Broadway New York, NY 10027.}
		
		\address[SC]{Department of Mathematical Sciences, Smith College, 10 Elm Street Northampton, MA 01063.}
		
		\address[PU]{Department of Mathematics, Princeton University, Fine Hall, Washington Road Princeton, NJ 08544-1000. }


		\begin{abstract}
			Two new qubit stabilizer codes with parameters  $\dsb{77, 0, 19}_2$ and  $\dsb{90, 0, 22}_2$ are constructed for the first time by employing additive symplectic self-dual $\F_4$ codes from multidimensional circulant (MDC) graphs. We completely classify MDC graph codes for lengths $4\le n \le 40$ and show that many optimal $\dsb{\ell, 0, d}$ qubit codes can be obtained from the MDC construction. 
			Moreover, we prove that adjacency matrices of MDC graphs have nested block circulant structure and determine isomorphism properties of MDC graphs.
			
		\end{abstract}
		
		\begin{keyword}
			Additive codes, Quantum codes, 	Circulant graphs, Multidimensional circulant graphs
			
		\end{keyword}
	\end{frontmatter}
	
	
	\section{Introduction}\label{Intro}
	Errors in quantum computing present a unique challenge in storing and transmitting data. When designing quantum error-correcting codes (QECCs), one needs to address both bit flip and phase flip errors which can occur simultaneosly. The existence of a QECC, which can  protect quantum information against decoherence, was first introduced in 1995 by Shor~\cite{shor}.
	In their seminal work~\cite{Calderbank1998}, Calderbank, Rains, Shor, and Sloane established a connection between classical error-correcting codes  and binary QECCs (qubit codes).
	
	Unlike their traditional counterparts, zero-dimensional qubit codes play an important role in quantum computing. They can be used to test the accuracy of quantum computers.  They can also be used to test the storage locations of  qubits that are experiencing greater effects of decoherence than originally predicted~\cite{Calderbank1998}. Zero-dimensional qubit codes corresponds to self-dual additive codes over the finite field $\F_4 = \{0, 1, \omega, \omega^2\}$, where $\omega^2 = \omega + 1$. Danielsen and Parker~\cite{Danielsen2006} employed Schlingemanns~\cite{schl} work to demonstrate that every graph generates a symplectic self-dual additive code and conversely, any self-dual additive code over $\F_4$ has a graphical representation.
	
	Highly symmetrical and vertex transitive graphs, such as circulant graphs, have been extensively studied for their ability to generate optimal self-dual additive codes~\cite{Grassl2017,verbanov, Saito2019}. 
	However, other classes of graphs have also demonstrated success in generating self-dual additive codes. For example, a recent paper~\cite{seneviratne}, presents five new $\dsb{\ell, 0, d}$ qubit codes with parameters $(\ell, d) \in \{(78, 20), (90, 21), (91, 22), (93, 21), (96, 22)$ produced by metacirculant graphs, a class of vertex transitive graphs.

	We further extend Danielsen and Parkers work by studying multidimensional circulant graphs (MDCs), a generalization of circulant graphs on multiple coordinates. We define MDC graphs and study their properties in Section~\ref{sec:mdc}. Enumeration of self-dual codes of length $4 \le n \le 40$ constructed from MDC graphs are outlined in Section~\ref{sec:additive}. Finally, we present new and optimal qubit codes from multi-dimensional construction in Section~\ref{sec:new}.
	
	\section{Multidimensional Circulant Graphs}\label{sec:mdc}
	
	Circulant graphs have been extensively studied for their applications in the field of code theory. Historically, they have demonstrated success in generating QECCs via self-dual additive codes. We reference the definition of circulant graphs presented in~\cite{Monakhova2012}.

	\begin{definition}
		Let $\mathbb{Z}_{n}$ denote the ring of integers modulo $n$. A circulant graph $C(n, S)$ is a Cayley graph on $\Z_n$. 
		That is, it is a graph whose vertices are labeled $\{0, 1, \cdots, n-1 \}$, where two vertices $x$ and $y$ are adjacent if and only if $x - y \,( mod\ n) \in S$, where $S \subset \Z_n$ with $S = -S$ and $0 \notin S$.
	\end{definition}
	
	The adjacency matrix of a circulant graph is a {\it circulant matrix}. An $n \times n$ matrix $B$ is circulant if it has the form
	\begin{equation}~\label{eq:one}
		B = 
		\begin{pmatrix}
			b_1 & b_2 & \cdots & b_{n-1} & b_n \\
			b_n & b_1 & \cdots & b_{n-2} & b_{n-1} \\
			\vdots  & \vdots  & \ddots & \vdots & \vdots \\
			b_3 & b_4 & \cdots & b_1 &b_2 \\
			b_2 & b_3 & \cdots & b_n &b_1 
		\end{pmatrix}.
	\end{equation}	
	
	There are numerous  generalizations of circulant graphs in literature~\cite{Alspach1982, Monakhova2012}.  Leighton~\cite{Leighton} extended the notion of circulant graphs to multiple coordinates, defining these graphs as multidimensional circulant. Essentially, a  MDC graph  is a Cayley graph on $\Z_{n_1} \times Z_{n_2}\times \cdots \times \Z_{n_k}$.

	\begin{definition} 
		Let $ {\bf N} = (n_1, n_2, \ldots, n_k)$ and let $S  \subset \Z_{n_1} \times Z_{n_2}\times \cdots \times \Z_{n_k}$, with $ S= -S$ and ${\bf 0} \notin S$.
		A MDC graph \mdc has the vertex set $V(\Gamma) =\{(v_1,  \ldots, v_k): v_1 \in \Z_{n_1},  \ldots, v_k \in \Z_{n_k}\}$ and two vertices ${\bf x} = (x_1,, \cdots, x_k)$ and ${\bf y} = (y_1, , \cdots, y_k)$ are adjacent if and only if $(x_1 - y_1\ (mod\ n_1)), \cdots, x_k - y_k\ (mod\ n_k)) \in S$,
		
	\end{definition}
	
	The following example presents the difference in structure between MDC and circulant graphs.
	
	\begin{example}
		The hypercube graph $Q_3$ is not circulant, but it is MDC, with parameters\\ 
		$\Gamma((2, 4), \{(0, 1), (0, 3), (1, 0)\})$. 
		The vertex set is partitioned into 
		\[
		V_0 :=\{(0,0),(0,1),(0,2),(0,3)\} \mbox{ and }
		V_1 :=\{(1,0),(1,1),(1,2),(1,3)\}.
		\]

		\begin{figure}[H] \label{fig:Q3}
			\centering
			\begin{tikzpicture}[multilayer=3d]
				\SetLayerDistance{-1.5}
				\Plane[x=-.7,y=-1.5,width=4.5,height=3,color=gray,layer=2,NoBorder]
				\Plane[x=-.7,y=-1.5,width=4.5,height=3,NoBorder]
				
				\Vertex[x=-0.1,y=-0.1, IdAsLabel,layer=1]{1}
				\Vertex[x=0.7,y=1,IdAsLabel,layer=1]{2}
				\Vertex[x=2.5,y=0.2,IdAsLabel,layer=1]{3}
				\Vertex[x=1.6,y=-1,IdAsLabel,layer=1]{4}
				\Vertex[IdAsLabel,color=gray,layer=2]{5}
				\Vertex[x=0.8,y=1,IdAsLabel,color=gray,layer=2]{6}
				\Vertex[x=2.4,y=0.4,IdAsLabel,color=gray,layer=2]{7}
				\Vertex[x=1.75,y=-1,IdAsLabel,color=gray,layer=2]{8}
				
				\Edge[color=blue](1)(2)
				\Edge[color=blue](2)(3)
				\Edge[color=blue](3)(4)
				\Edge[color=blue](4)(1)
				
				\Edge[style=dashed](5)(6)
				\Edge[style=dashed](6)(7)
				\Edge(7)(8)
				\Edge(8)(5)
				
				\Edge(1)(5)
				\Edge[style=dashed](2)(6)
				\Edge(4)(8)
				\Edge(3)(7)
				\Edge[style=dashed](6)(7)
			\end{tikzpicture}
			\caption{The $3$-cube graph as a MDC graph}
		\end{figure}
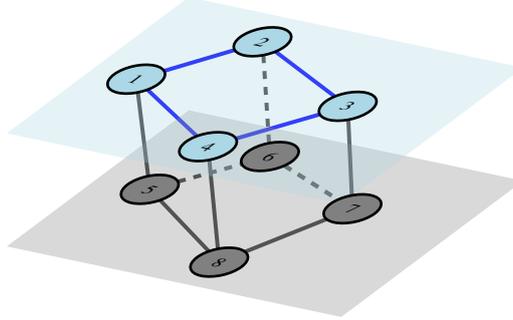
		Figure $1$ represents the hypercube graph $Q_3$ as a MDC graph with parameters $N=(2, 4)$ and $S = \{(0, 1), (0, 3), (1, 0)\})$. The upper layer contains the vertices in $V_0$ as $1, \ldots, 4$ and the lower layer presents the vertices in $V_1$ as $5, \ldots, 8$.
	\end{example}
	
	\begin{definition}
		A nested block-circulant matrix is a $n \times n$ matrix that takes the form 
		\begin{equation*}
			A = 
			\begin{pmatrix}
				B_1 & B_2 & \ldots & B_{{l_0}-1} & B_{l_0} \\ 
				B_{l_0} & B_1 & \ldots & B_{{l_0}-2} & B_{{l_0}-1} \\ 
				\vdots & \vdots & \ddots & \vdots & \vdots \\ 
				B_{3} & B_4 & \ldots & B_{1} & B_{2} \\ 
				B_{2} & B_{3} & \ldots & B_{{l_0}} & B_{1}
			\end{pmatrix} 
		\end{equation*}   
		where each block $B_1,B_2,\ldots,B_{l_0}$ can be recursively partitioned into blocks $B^{i}_1,B^{i}_2,\ldots,B^{i}_{l_1}$, where $1\le i \le r$ for some $r$ and the smallest form of each block is a circulant matrix, as given in equation~\ref{eq:one}.
	\end{definition}

	\begin{theorem}\label{thm:nested}
		Consider a multidimensional circulant graph $\Gamma (\mathbf{N},S)$ with $\mathbf{N}=(n_1,n_2,\ldots,n_k)$, where $n_1 \le n_2 \le \ldots \le n_k$. First, let us define $N_1 = \frac{(n_2 \cdot n_3 \cdots n_k)}{n_1}$, $N_2 = \frac{N_1}{n_2}$, $N_3 = \frac{N_2}{n_3}$, $\ldots$ , $N_k = \frac{N_{k-1}}{n_k}$. 
		The adjacency matrix $A(\Gamma)$ of \mdc is nested block circulant and has the form:
		\begin{equation*}
			A(\Gamma) = 
			\begin{pmatrix}
				A_{1,1} & A_{1,2} & \ldots & A_{1,{l-1}} & A_{1,l} \\ 
				A_{1,l} & A_{1,1} & \ldots & A_{1,{l-2}} & A_{1,{l-1}} \\ 
				\vdots& \vdots & \ddots & \vdots & \vdots \\ 
				A_{1,3} & A_{1,4} & \ldots & A_{1,1} & A_{1,2} \\ 
				A_{1,2} & A_{1,3} & \ldots & A_{1,l} & A_{1,1}
			\end{pmatrix} 
		\end{equation*}   
		where each block $A_{1,j}$ is a $N_1 \times N_1$
		submatrix of $A(\Gamma)$ for $1 \le j \le n_1 $. 
	\end{theorem}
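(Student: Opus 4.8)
\emph{Proof strategy.} The plan is to fix the lexicographic ordering of the vertex set used to form $A(\Gamma)$ — list $(v_1,\ldots,v_k)$ before $(v_1',\ldots,v_k')$ when, at the first coordinate in which they differ, the former is smaller — and then to prove, by induction on the number $k$ of coordinates, the following slightly more general statement: \emph{if $M$ is any square matrix whose rows and columns are indexed by $\Z_{m_1}\times\cdots\times\Z_{m_r}$ in this lexicographic order and whose $(\bx,\by)$ entry depends only on the coordinatewise difference $\bx-\by$, then $M$ is nested block circulant, with the $i$-th level of blocks having side length $m_{i+1}m_{i+2}\cdots m_r$.} Granting this, the theorem is immediate: since $\Gamma(\mathbf{N},S)$ is a Cayley graph, $\bx$ and $\by$ are adjacent precisely when $\bx-\by\in S$, so the entry $A(\Gamma)_{\bx,\by}$ is the indicator of $\bx-\by\in S$ and hence a function of $\bx-\by$ alone; applying the statement with $(m_1,\ldots,m_r)=(n_1,\ldots,n_k)$ produces exactly the asserted form, the level-one blocks $A_{1,j}$ being $N_1\times N_1$ (with $N_1=n_2n_3\cdots n_k=n/n_1$) and, more generally, the level-$i$ blocks being $N_i\times N_i$ with $N_i=n_{i+1}\cdots n_k$ and $N_k=1$.

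For the induction, the base case $r=1$ is nothing but the statement that a matrix whose $(x,y)$ entry depends only on $x-y\Mod{m_1}$ is, by definition, a circulant matrix. For $r\ge 2$ I would group the rows and columns by the value of the first coordinate, which displays $M$ as an $m_1\times m_1$ array of equal-size blocks $M^{(a,b)}$, $a,b\in\Z_{m_1}$, the $(\bx',\by')$ entry of $M^{(a,b)}$ being the defining function evaluated at $(a-b,\ \bx'-\by')$ for $\bx',\by'\in\Z_{m_2}\times\cdots\times\Z_{m_r}$. Two observations then finish the step. First, $M^{(a,b)}$ depends only on $a-b\Mod{m_1}$, so the array of blocks is circulant and $M$ has the displayed block-circulant shape with $l=m_1$. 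Second, for each fixed residue $c\in\Z_{m_1}$ the block attached to $c$ has $(\bx',\by')$ entry equal to a function of $\bx'-\by'$ alone, namely the defining function with its first argument frozen at $c$ (for $M=A(\Gamma)$, the difference-indicator of the translated set $\{\bs':(c,\bs')\in S\}$); hence the induction hypothesis applies and that block is itself nested block circulant. Thus $M$ is block circulant with every block nested block circulant, which is exactly the recursive structure required of a nested block-circulant matrix, the bottom of the recursion being the circulant base case $r=1$. This closes the induction.

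The one step I would flag as the real content, as opposed to bookkeeping, is this strengthening of the inductive hypothesis from ``adjacency matrices of MDC graphs'' to ``matrices that are functions of the coordinatewise difference.'' It is forced: the block $A_{1,j}$ attached to a nonzero first-coordinate difference $c$ is governed by the translated set $S_c=\{\bs':(c,\bs')\in S\}$, which is in general neither symmetric nor free of $\0$, so $A_{1,j}$ is not itself the adjacency matrix of any MDC graph and one cannot simply recurse on the original statement; recasting the hypothesis in terms of difference-invariance (equivalently, translation-invariance, which is precisely the Cayley-graph property of $\Gamma$) is what makes the recursion self-similar. Everything else is routine: checking that the lexicographic index of $(v_1,\ldots,v_k)$ equals $v_1N_1+v_2N_2+\cdots+v_{k-1}N_{k-1}+v_k$, so that grouping by the first coordinate really does produce $n_1$ contiguous equal blocks of side $N_1$; and noting that the monotonicity hypothesis $n_1\le n_2\le\cdots\le n_k$ is inessential to the structure — any ordering of the coordinates yields a nested block circulant matrix, with only the block sizes permuted through the levels — so it serves purely as a normalization.
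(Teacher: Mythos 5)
Your proof is correct and follows essentially the same route as the paper's: order the vertices lexicographically, partition by the first coordinate, observe that the block joining two parts depends only on the difference of their first coordinates (hence the top level is block circulant), and recurse into the blocks down to the circulant $1\times 1$ level. The difference is only one of presentation — the paper writes out the first two levels explicitly and then appeals to ``using this argument,'' whereas you package the recursion as an induction on the number of coordinates with the correctly strengthened hypothesis that the entries depend only on the coordinatewise difference (needed, as you note, because the off-diagonal blocks are governed by translates $S_c$ that are not themselves MDC defining sets), and in passing you use the corrected block size $N_1=n_2\cdots n_k=n/n_1$ rather than the paper's misprinted $N_1=(n_2\cdots n_k)/n_1$.
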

	
	\begin{proof}Let $\Gamma(\mathbf{N},S)$ be a multidimensional circulant graph where $\mathbf{N}=(n_1,n_2,\cdots,n_k)$ and the vertices are ordered lexicographically. Let $N_1 = \frac{(n_2 \cdot n_3 \cdots n_k)}{n_1}$, $N_2 = \frac{N_1}{n_2}$, $N_3 = \frac{N_2}{n_3}$, $\ldots$ , $N_k = \frac{N_{k-1}}{n_k}$. 
		
		First, let us show that the adjacency matrix $A$ of $\Gamma(\mathbf{N},S)$ is block-circulant. To do so, let us partition the vertex set as $V = V^1_0 \cup V^1_1 \cup \cdots \cup V^1_{n_1-1}$, where \begin{equation} \label{eqn:vertex}
			V^1_{i} = \{(i,x_2,\ldots,x_k)\ |\ 0 \le x_j \le n_j - 1,\ 2 \le j \le k\}\; \mbox{and}\; 0 \le i \le n_{1}-1.
		\end{equation}
		
		Accordingly, each $V^1_i$ will form $n_1$ submatrices, each of which we will call $A^{1}_{i,l}$, where $|A^{1}_{i,l}| = N_1$, $0 \le l \le n_{1}-1$. 
		Now, let vertex ${\bf x} = (i, x_2, \ldots, x_k) \in V_{i}^{1}$ and vertex ${\bf y} = (l, y_2, \ldots, y_{k}) \in V_{l}^{1}$. We know ${\bf x} \sim {\bf y}$ iff ${\bf x} - {\bf y} \in S$, or rather, $(i-l,x_2-y_2,\ldots,x_k-y_k) \in S$. With this in mind, consider vertex ${\bf x}^{\prime} = (i+1, x_2, \ldots, x_k) \in V_{i+1}^{1}$ and vertex ${\bf y}^{\prime} = (l+1, y_2, \ldots, y_{k}) \in V_{l+1}^{1}$. We can see that ${\bf x}^{\prime} \sim {\bf y}^{\prime}$ iff $(i-l,x_2-y_2,\ldots,x_k-y_k) \in S$. Therefore, the adjacency relation between $V^1_i$ and $V^1_l$ is the same as the adjacency relation between $V^1_{i+1}$ and $V^1_{l+1}$. Thus, we can establish the equality $A^1_{i,l} = A^1_{{i+1},{l+1}}$, showing that each row of submatrices in $A(\Gamma)$ is a cyclic shift of one block to the right of the previous row of submatrices.
		
		Next, let us show that each submatrix $A^1_{i,l}$ is also block-circulant. Let us partition $V^1_i = V^2_{i,0}\cup V^2_{i,1} \cup \ldots \cup V^2_{i,n_2-1}$, where 
		\begin{center}
			$V^2_{i,p} = \{(i,p,x_3,\ldots,x_k)\ |\ 0 \le x_j \le n_j - 1,\ 3 \le j \le k\}$ and $0 \le p \le n_2-1$. 
		\end{center}
		Accordingly, each $V^2_{i,p}$ will form $n_2$ submatrices, each of which we will call $A^{2}_{p,q}$, where $|A^{2}_{p,q}| = N_2$, $0 \le q \le n_{2}-1$.
		Now, let vertex ${\bf x} = (i, p, x_3, \ldots, x_k) \in V^2_{i,p}$ and vertex ${\bf y} = (l, q, y_3, \ldots, y_{k}) \in V^2_{l,q}$. We know ${\bf x} \sim {\bf y}$ iff ${\bf x} - {\bf y} \in S$, or rather, $(i-l,p-q,x_3-y_3,\ldots,x_k-y_k) \in S$. With this in mind, consider vertex ${\bf x}^{\prime} = (i, p+1, x_3, \ldots, x_k) \in V^2_{i,p+1}$ and vertex ${\bf y}^{\prime} = (l, q+1, y_3, \ldots, y_{k}) \in V^2_{l,q+1}$. We can see that ${\bf x}^{\prime} \sim {\bf y}^{\prime}$ iff $(i-l,p-q,x_3-y_3,\ldots,x_k-y_k) \in S$. Thus, we can establish the equality $A^{2}_{p,q} = A^{2}_{p+1,q+1}$, showing that each row of submatrices in $A^1_{i,l}$ is a cyclic shift of one block to the right of the previous row of submatrices in $A^1_{i,l}$.
		
		Thus, using this argument, there will be $|N|$ layers $l$ of nested blocks that compose $A$, and layers $1 \le l \le N-1$ will be block-circulant with ${n_l}^2$  submatrices $A^l_{i,j}$, where $0 \le i,j \le n_l-1$ and $|A^l_{i,j}|=N_l$. The $|N|^{th}$ layer will be circulant since $|A^{l}_{i,j}|=N_k=1$ where $l=|N|$, showing that it will be composed of blocks with dimension $1 \times 1$ circulating around the matrix, which are really just singular elements.
	\end{proof}
	
	\begin{example}\label{eg:nested}
		Let $\Gamma (\mathbf{N},S)$ be a MDC graph with $\mathbf{N} = (3,2,2)$ and $S=\{(0,1,0),(2,0,1),(2,0,0)\}$. This yields the adjacency matrix 
		\begin{center} 
			$A$ = $\begin{pmatrix} \begin{array}
					{|c c:c c|} \hline 0 & 0 & 1 & 0 \\ 0 & 0 & 0 & 1 \\ \hdashline 1 & 0 & 0 & 0 \\ 0 & 1 & 0 & 0 \\\hline
				\end{array} & \begin{array}
					{c c c c} 1 & 1 & 0 & 0  \\ 1 & 1 & 0 & 0  \\ 0 & 0 & 1 & 1 \\ 0 & 0 & 1 & 1 
				\end{array} & \begin{array}
					{c c c c} 1 & 1 & 0 & 0  \\ 1 & 1 & 0 & 0  \\ 0 & 0 & 1 & 1 \\ 0 & 0 & 1 & 1 
				\end{array} \\\\ \begin{array}
					{c c c c} 1 & 1 & 0 & 0  \\ 1 & 1 & 0 & 0  \\ 0 & 0 & 1 & 1 \\ 0 & 0 & 1 & 1 
				\end{array} & \begin{array}
					{c c c c} 0 & 0 & 1 & 0 \\ 0 & 0 & 0 & 1 \\ 1 & 0 & 0 & 0 \\ 0 & 1 & 0 & 0 
				\end{array} & \begin{array}
					{c c c c} 1 & 1 & 0 & 0  \\ 1 & 1 & 0 & 0  \\ 0 & 0 & 1 & 1 \\ 0 & 0 & 1 & 1 
				\end{array} \\\\ \begin{array}
					{c c c c} 1 & 1 & 0 & 0  \\ 1 & 1 & 0 & 0  \\ 0 & 0 & 1 & 1 \\ 0 & 0 & 1 & 1 
				\end{array} & \begin{array}
					{c c c c} 1 & 1 & 0 & 0  \\ 1 & 1 & 0 & 0  \\ 0 & 0 & 1 & 1 \\ 0 & 0 & 1 & 1 
				\end{array} & \begin{array}
					{c c c c} 0 & 0 & 1 & 0 \\ 0 & 0 & 0 & 1 \\ 1 & 0 & 0 & 0 \\ 0 & 1 & 0 & 0 
				\end{array}
			\end{pmatrix}$ \end{center}
		Note that $A(\Gamma)$ is composed of $9$ blocks that are $4 \times 4$ sub-matrices, where the first row of blocks circulates to the right to produce the second and third rows of blocks. Each of these larger blocks is composed of four $2 \times 2$ smaller blocks, in accordance with Theorem~\ref{thm:nested}.
	\end{example}
	
	\begin{remark}
		The adjacency matrix of a MDC graph can be completely determined by the first row of each of the $n_1$ first row block matrices. In example~\ref{eg:nested}, supports of the first row of each of the first three block matrices are $\{3\}, \{1, 2\},$ and  $\{1, 2\}$. We could write $\Gamma((3, 2, 2), [\{3\}, \{1, 2\}, \{1, 2\}])$ rather than listing all elements of the defining set $S$. Henceforth, $S$ denotes the supports of the first row of the first $n_1$ blocks.
	\end{remark}

	The complement $G^{\prime}$ of a graph $G$ is the graph on the same vertex set as $G$ such that two distinct vertices in $G^{\prime}$ are adjacent if and only if they are not adjacent in $G$. It is a well known fact that the complement graph of a circulant graph is again a circulant graph. We will show that the same is true for MDC graphs as well. 
	
	\begin{lemma}\label{lem:complement}
		The complement graph of a MDC graph is also MDC.
	\end{lemma}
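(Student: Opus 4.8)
The plan is to exhibit the complement explicitly as a Cayley graph on the \emph{same} underlying group $\Z_{n_1}\times\Z_{n_2}\times\cdots\times\Z_{n_k}$, and then to check that the associated connection set satisfies the two defining conditions of an MDC graph, namely symmetry under negation and exclusion of the identity. This reduces the lemma to a short bookkeeping argument about subsets of the group.

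Concretely, I would fix an MDC graph $\Gamma(\mathbf{N},S)$ with $\mathbf{N}=(n_1,\ldots,n_k)$, write $G=\Z_{n_1}\times\cdots\times\Z_{n_k}$, and set $S' = G\setminus\bigl(S\cup\{\mathbf{0}\}\bigr)$. By definition of the complement, distinct vertices $\mathbf{x},\mathbf{y}$ are adjacent in $\Gamma(\mathbf{N},S)'$ precisely when $\mathbf{x}-\mathbf{y}\notin S$, i.e.\ when $\mathbf{x}-\mathbf{y}\in S'$; so $\Gamma(\mathbf{N},S)'$ has vertex set $G$ with adjacency governed by $S'$, and it only remains to verify that $S'$ is a legal MDC connection set. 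The condition $\mathbf{0}\notin S'$ is immediate. For $S' = -S'$, I would use that negation $\mathbf{s}\mapsto -\mathbf{s}$ is a bijection of $G$ fixing $\mathbf{0}$; combined with the hypothesis $S=-S$ it fixes $S\cup\{\mathbf{0}\}$, hence also its complement $S'$. Therefore $\Gamma(\mathbf{N},S)' = \Gamma(\mathbf{N},S')$ is an MDC graph with the same dimension vector $\mathbf{N}$, and (invoking Theorem~\ref{thm:nested} if one wishes to make the point explicit) its adjacency matrix is again nested block circulant.

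I do not expect a genuine obstacle here; the one point worth a sentence of care is that the MDC definition forbids $\mathbf{0}$ in the connection set, so when passing to the complement one must remove $\mathbf{0}$ \emph{in addition to} removing $S$. The loop at each vertex that the naive set-complement $G\setminus S$ would introduce is exactly what the ``$\mathbf{0}\notin S$'' convention excludes, and deleting it changes nothing about the adjacency relation among distinct vertices, so the identification $\Gamma(\mathbf{N},S)' = \Gamma(\mathbf{N},S')$ is clean. (An equivalent route, if preferred, is to note $A(\Gamma') = J - I - A(\Gamma)$ and observe that $J$, $I$, and $A(\Gamma)$ are all nested block circulant of the same shape, so their combination is as well; but the Cayley-graph argument above is the most direct.)
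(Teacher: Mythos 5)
Your proof is correct and takes essentially the same route as the paper's: both define $S' = (V\setminus\{\mathbf{0}\})\setminus S$ and identify the complement graph with $\Gamma(\mathbf{N},S')$ by the observation that distinct vertices satisfy $\mathbf{x}-\mathbf{y}\in S'$ exactly when $\mathbf{x}-\mathbf{y}\notin S$. Yours is in fact slightly more complete, since you explicitly verify that $S'=-S'$ and $\mathbf{0}\notin S'$, a check the paper's proof leaves implicit.
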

	
	\begin{proof}
		Let \mdc be a MDC graph with the vertex set $V$ and let $V^{*} = V \setminus {\bf 0}$. Clearly, $S \subset V^{*}$. Let us define $S^{\prime} = V^{*}\setminus S$ and claim $\Gamma^{\prime}({\bf N}, S) = \Gamma({\bf N}, S^{\prime})$. Let ${\bf x}$ and ${\bf y}$ be two adjacent vertices in \mdc. By the definition, ${\bf x} - {\bf y} \in S$ and  ${\bf x} - {\bf y} \notin V^{*}\setminus S = S^{\prime}$. Hence, if ${\bf x}$ is adjacent to ${\bf y}$ in \mdc then they are not adjacent in 	$\Gamma({\bf N}, S^{\prime}) $ and vice versa. Therefore, we have 
		$\Gamma^{\prime}({\bf N}, S) = \Gamma({\bf N}, S^{\prime})$
		
	\end{proof}

	An $m$-multipartite graphs is a graph in which vertices can be partitioned into $m$ distinct independent sets.  
	\begin{theorem}
		Let \mdc be a MDC graph and define
		$S_{j} = \bigcup\limits_{(x_1, \ldots, x_k)\in S} \{x_j\}$, the set consisting of all $j^{th}$ coordinates of elements of $S$.  
		Then \mdc  is a multipartite graph whenever $0 \notin S_1$.
	\end{theorem}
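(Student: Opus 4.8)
The plan is to exhibit an explicit partition of the vertex set into independent sets. The natural candidate is the family of ``first-coordinate slices'' already introduced in the proof of Theorem~\ref{thm:nested}: for $0 \le i \le n_1-1$ set
\[
V^1_i = \{(i,x_2,\ldots,x_k)\ |\ 0\le x_j\le n_j-1,\ 2\le j\le k\},
\]
as in equation~\ref{eqn:vertex}, so that $V(\Gamma)=V^1_0\cup V^1_1\cup\cdots\cup V^1_{n_1-1}$ is a partition of $V(\Gamma)$ into $n_1$ nonempty classes.

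The key step is to check that each slice $V^1_i$ is an independent set. Take two distinct vertices $\bx=(i,x_2,\ldots,x_k)$ and $\by=(i,y_2,\ldots,y_k)$ lying in the same class $V^1_i$. Their difference $\bx-\by=(0,x_2-y_2,\ldots,x_k-y_k)$ has first coordinate $0$. Since $0\notin S_1$, no element of $S$ has first coordinate equal to $0$, so $\bx-\by\notin S$ and therefore $\bx\not\sim\by$. Hence no edge of $\Gamma(\mathbf{N},S)$ joins two vertices of the same slice, which is precisely the statement that $\Gamma(\mathbf{N},S)$ is $n_1$-partite, and in particular multipartite.

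I do not expect a genuine obstacle here; the only points requiring a little care are bookkeeping rather than substance. First, the hypothesis $0\notin S_1$ is consistent with the standing assumption $S=-S$, because $x_1\not\equiv 0$ implies $-x_1\not\equiv 0\pmod{n_1}$, and it forces $n_1\ge 2$ unless $S=\emptyset$ (in which case $\Gamma$ is edgeless and trivially multipartite). Second, the very same computation yields the slightly stronger fact that $0\notin S_j$ for any single index $j$ makes $\Gamma(\mathbf{N},S)$ an $n_j$-partite graph, via the analogous slicing along the $j$-th coordinate; the stated claim is the case $j=1$. This is the same reduction-modulo-$n_1$ argument in the first coordinate that produced the block-circulant structure in Theorem~\ref{thm:nested}.
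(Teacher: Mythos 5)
Your proof is correct and follows essentially the same route as the paper: partition the vertices into first-coordinate slices $V^1_0,\ldots,V^1_{n_1-1}$ and observe that the difference of two vertices in the same slice has first coordinate $0$, hence lies outside $S$ when $0\notin S_1$, making each slice independent and the graph $n_1$-partite. Your additional remarks (consistency with $S=-S$ and the generalization to slicing along any coordinate $j$ with $0\notin S_j$) are accurate but not needed.
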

	
	\begin{proof}
		Arrange the vertices in lexicographical order as in equation~\ref{eqn:vertex}. 
		\begin{equation*}
			V_{i} = \{(i,x_2,\ldots,x_k)\ |\ 0 \le x_j \le n_j - 1,\ 2 \le j \le k\}\; \mbox{and}\; 0 \le i \le n_{1}-1.
		\end{equation*}
		Then $V(\Gamma) = V_{0}\cup V_{1} \cup \cdots \cup V_{n_{j}-1} $ is a partition of the vertex set. Assume $0 \notin S_1$  and
		let ${\bf v}=(i, v_2, \ldots, v_{k})$ and ${\bf u} = (i, u_2, \ldots, u_{k})$ be arbitrary vertices in the same partition $V_{i}$. 
		Note that ${\bf v} - {\bf u} =  (0, v_2-u_2, \ldots, v_k-u_k)$ and ${\bf v} - {\bf u} \notin S$ as $0 \notin S_1$. Therefore, none of the vertices in same partition $V_i$ are connected to each other and \mdc is a $n_1$ partite graph.
		
	\end{proof}
	
	Next, we establish a relationship between MDC graphs and another class of vertex transitive graphs, namely metacirculant graphs, introduced by Alspach and Parsons \cite{Alspach1982}. First, we recall the definition of metacirculant graphs.
	
	\begin{definition}~\cite{Alspach1982}
		Let $m,n$ be two fixed positive integers and $\alpha\in \mathbb{Z}_{n}$ be a unit. Let $S_{0}, S_{1}, \ldots, S_{\floor{m/2}}  \\ \subset\mathbb{Z}_{n}$ satisfy the four properties $S_{0} = -S_{0}$,  $0\notin S_{0}$,  $\alpha^{m}S_k = S_k$ for $1\le k\le \lfloor m/2 \rfloor$, and If $m$ is even then $\alpha^{m/2}S_{m/2} = -S_{m/2}$. 
		The meta-circulant graph $\Gamma:=\Gamma\left(m, n, \alpha, S_{0}, S_{1}, \ldots, S_{\floor{m/2}}\right)$ has the vertex set $V(\Gamma)=\mathbb{Z}_{m}\times \mathbb{Z}_{n}$. Let $V_0, V_1, \ldots, V_{m-1}$, where 
		$V_{i} := \{(i, j) : 0\le j \le n-1 \}$, be a partition of $V(\Gamma)$. Let $1 \le k \le \floor{m/2}$. Vertices $(i, j)$ and $(i+k, h)$ are adjacent if and only if $(h-j) \in \alpha^{i} \, S_k$.
	\end{definition}
	
	The following result shows that any two dimensional MDC graph can be represented as a metacirculant graph. But, the converse of the theorem is not true in general.

	\begin{theorem}\label{thm:meta}
		Any two-dimensional MDC graph can be represented as a metacirculant graph with $\alpha = 1$.
	\end{theorem}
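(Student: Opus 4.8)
The approach is to give, for an arbitrary two-dimensional MDC graph, an explicit set of metacirculant parameters with $\alpha=1$ that produces the very same graph under the identity map on the shared vertex set. Let $\Gamma((n_1,n_2),S)$ be two-dimensional, and set $m=n_1$, $n=n_2$, $\alpha=1$; then both the MDC graph and the prospective metacirculant graph have vertex set $\mathbb{Z}_{n_1}\times\mathbb{Z}_{n_2}$ and the same row partition $V_i=\{(i,j):0\le j\le n_2-1\}$. For $j\in\mathbb{Z}_{n_1}$ write $T_j=\{\,b\in\mathbb{Z}_{n_2}:(j,b)\in S\,\}$ for the $j$-th slice of $S$, and put $S_j:=T_j$ for $0\le j\le\floor{m/2}$. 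The hypotheses $S=-S$ and $\mathbf{0}\notin S$ translate into $T_{-j}=-T_j$ for every $j$ (since $(j,b)\in S\iff(-j,-b)\in S$) and $0\notin T_0$.

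The first step is to verify that $(m,n,1,S_0,\dots,S_{\floor{m/2}})$ satisfies the four defining conditions of a metacirculant datum: $S_0=-S_0$ is the identity $T_0=-T_0$ and $0\notin S_0$ is $0\notin T_0$; the condition $\alpha^m S_k=S_k$ is automatic because $\alpha=1$; and when $m$ is even the condition $\alpha^{m/2}S_{m/2}=-S_{m/2}$ reads $T_{m/2}=-T_{m/2}$, which holds because $-m/2\equiv m/2\pmod{m}$ and $T_{-m/2}=-T_{m/2}$.

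The second step is to check that the two graphs have the same edge set. Fix $k$ with $0\le k\le\floor{m/2}$, an index $i$, and vertices $(i,j),(i+k,h)$: in the metacirculant graph they are adjacent iff $h-j\in\alpha^{i}S_k=S_k=T_k$, i.e.\ iff $(k,h-j)\in S$, while in $\Gamma((n_1,n_2),S)$ they are adjacent iff $(i+k,h)-(i,j)=(k,h-j)\in S$; these conditions coincide (and $k=0$ covers adjacencies inside a single $V_i$). Every pair of vertices lies in $V_i\times V_{i+k}$ for some $k\le\floor{m/2}$, so this already pins down the metacirculant edge set; the only pair described twice---one straddling $V_i$ and $V_{i+m/2}$ with $m$ even---is described consistently precisely because $S_{m/2}=-S_{m/2}$. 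Finally, a pair of vertices whose row-difference exceeds $\floor{m/2}$ is matched by running the same computation from the other row (using $-k\equiv m-k\pmod{m}$, so that the pair sits in $V_b\times V_{b+k'}$ with $k'=m-k\le\floor{m/2}$), which shows $\Gamma((n_1,n_2),S)$ has no further edges. Hence $\Gamma((n_1,n_2),S)=\Gamma\!\left(n_1,n_2,1,S_0,\dots,S_{\floor{n_1/2}}\right)$.

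There is no genuine conceptual obstacle: the parameters are forced and every step is a direct translation between the two definitions. The only thing requiring care is the bookkeeping---aligning the slice $T_k$ of $S$ with the connection set $S_k$, respecting the (slightly different) orientation conventions in the two definitions, and confirming that slices with index exceeding $\floor{m/2}$, which never appear as an $S_k$, are nonetheless reproduced through the reverse edges; this is exactly where the standing hypotheses $S=-S$ and $\mathbf{0}\notin S$ are used.
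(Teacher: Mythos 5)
Your proposal is correct and follows essentially the same route as the paper: define the slices $S_k=\{b:(k,b)\in S\}$, verify the four metacirculant axioms from $S=-S$ and $\mathbf{0}\notin S$ (with $\alpha^mS_k=S_k$ trivial since $\alpha=1$), and check that adjacency translates directly. Your treatment of the indices outside $\{0,\dots,\lfloor m/2\rfloor\}$ and of the $k=m/2$ wrap-around is more explicit than the paper's, but it is the same argument.
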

	
	\begin{proof}
		We consider two-dimensional MDC graphs $\Gamma((m,n), S)$ and metacirculant graphs\\ $\Gamma(m,n,1,S_0,S_1,\cdots S_k)$.  By definition, both vertex sets are $\mathbb{Z}_m \times \mathbb{Z}_n$, so they have a trivial bijective mapping. We wish to show that this bijective mapping results in a homomorphism between edges.
		
		Given any two-dimensional circulant graph $G_1 = \Gamma((m,n), S)$, let $S_i = \{s | (i,s) \in S\}$ for $0 \leq i \leq \lfloor m/2 \rfloor$, and $G_2$ be the corresponding metacirculant graph with $\alpha = 1$. We first verify that $G_2$ satisfies the criteria for metacirculant graphs:
		$$S = -S \implies S_0 = -S_0$$
		$$0 \notin S \implies 0 \notin S_0$$
		$$\alpha^mS_k = 1^mS_k = S_k$$
		$$S = -S \implies S_{m/2} = -S_{m/2} \text{ for even } m$$
		
		Now, for any two vertices $(a_1, b_1), (a_2,b_2) \in \mathbb{Z}_m \times \mathbb{Z}_n$, we have
		\begin{align*}
			(a_1, b_1) \sim (a_2,b_2) \text{ in } G_1 &\iff (a_2 - a_1,b_2-b_1) \in S \\
			&\iff b_2-b_1 \in S_{a_2 - a_1} \text{ or } b_1-b_2 \in S_{a_1 - a_2} \\
			&\iff b_2-b_1 \in \alpha^{a_1}S_{a_2 - a_1} \text{ or } b_1-b_2 \in \alpha^{a_2}S_{a_1 - a_2} \\
			&\iff (a_1, b_1) \sim (a_2,b_2) \text{ in } G_2,
		\end{align*}
		thus demonstrating that adjacency is the same and $G_1$ and $G_2$ are isomorphic.
	\end{proof}
	
	Now that we have established fundamental properties of MDC graphs, we explore some isomorphism properties of these graphs. Understanding isomorphism properties within the class of MDC graphs and between the classes of MDC graphs and other vertex transitive graphs allows us to improve the search process for self-dual additive codes from MDC graphs, effectively reducing the number of MDC graphs in the search process for a given length. We now present these isomorphism properties.

	Certain MDC graphs happen to be, in fact, one dimensional circulant graphs. Leighton~\cite{Leighton} proved the following characterization of MDC graphs. 
	
	\begin{proposition}~\cite{Leighton}\label{thm:prime}
		The MDC graph \mdc is isomorphic to a circulant graph on $n$ vertices whenever $n = \prod_{1}^{k}n_i$ with	${\bf N} = (n_1, n_2, \ldots, n_k)$ and $n_i, 1\le i \le k$ are distinct primes.
	\end{proposition}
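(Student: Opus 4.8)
The plan is to recognize $\Gamma(\mathbf{N},S)$ as a Cayley graph on the direct-product group $G=\Z_{n_1}\times\Z_{n_2}\times\cdots\times\Z_{n_k}$ with connection set $S$ (this is immediate from the definition: vertices are the elements of $G$, and $\bx\sim\by$ iff $\bx-\by\in S$), and then transport it along the Chinese Remainder isomorphism. Since $n_1,\ldots,n_k$ are distinct primes they are pairwise coprime, so the Chinese Remainder Theorem provides a group isomorphism $\phi\colon \Z_n\to G$ given by $x\mapsto(x\Mod{n_1},\ldots,x\Mod{n_k})$, where $n=\prod_{i=1}^{k}n_i$.

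Next I would pull the connection set back: set $S^{\ast}=\phi^{-1}(S)\subset\Z_n$. Because $\phi$ is a group isomorphism, $S=-S$ gives $S^{\ast}=-S^{\ast}$, and $\phi(0)=\0\notin S$ gives $0\notin S^{\ast}$. Hence $S^{\ast}$ is a legitimate connection set and the circulant graph $C(n,S^{\ast})$ is well defined. I would then check that $\phi$ is a graph isomorphism $C(n,S^{\ast})\to\Gamma(\mathbf{N},S)$: it is a bijection on vertex sets by CRT, and for $x,y\in\Z_n$ one has $x\sim y$ in $C(n,S^{\ast})$ iff $x-y\in S^{\ast}$ iff $\phi(x-y)\in S$ iff $\phi(x)-\phi(y)\in S$ (using additivity of $\phi$) iff $\phi(x)\sim\phi(y)$ in $\Gamma(\mathbf{N},S)$. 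Therefore $\Gamma(\mathbf{N},S)\cong C(n,S^{\ast})$, a circulant graph on $n$ vertices.

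There is no real obstacle here; the argument is essentially bookkeeping around CRT, and the only point that needs any care is confirming that the defining conditions on $S$ (namely $S=-S$ and $\0\notin S$) survive transport under $\phi$ so that $S^{\ast}$ is a valid circulant connection set — and this follows at once from $\phi$ being a group homomorphism. It is worth remarking that the proof uses only pairwise coprimality of the $n_i$, so the ``distinct primes'' hypothesis can be weakened to $\gcd(n_i,n_j)=1$ for $i\neq j$; I would include this observation, while also noting that one could alternatively just cite Leighton~\cite{Leighton}, since the CRT proof is short enough to give in full.
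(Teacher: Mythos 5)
Your proof is correct. The paper itself does not prove this proposition---it is quoted from Leighton~\cite{Leighton}---but it does prove exactly the generalization you mention (Theorem~\ref{thm:coprime}, for pairwise coprime $n_i$), and your argument is essentially the same as that proof: both rest on a Chinese-Remainder group isomorphism between $\Z_n$ and $\Z_{n_1}\times\cdots\times\Z_{n_k}$ along which the connection set is transported, with the checks $S=-S$ and $\0\notin S$ carried over. The differences are cosmetic: you use the evaluation map $x\mapsto(x\bmod n_1,\ldots,x\bmod n_k)$ where the paper uses the map in the opposite direction, $(x,y)\mapsto nx+my$; you handle general $k$ in one stroke where the paper writes out only $k=2$; and your two-way ``iff'' chain is a bit more careful than the paper's one-directional adjacency check. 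Your closing observation that distinct primality can be weakened to pairwise coprimality is precisely the content of the paper's Theorem~\ref{thm:coprime}.
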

	
	Using theorem~\ref{thm:prime} as a framework, we establish a more stronger result valid for all distinct relatively prime numbers.

	\begin{theorem}\label{thm:coprime}
		The MDC graph \mdc is isomorphic to a circulant graph on $n$ vertices whenever $n = \prod_{1}^{k}n_i$ with	${\bf N} = (n_1, n_2, \ldots, n_k)$ and $n_i, 1\le i \le k$ are distinct relatively prime integers.
	\end{theorem}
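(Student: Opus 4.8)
The plan is to prove this purely group-theoretically, via the Chinese Remainder Theorem (CRT), using the fact that an MDC graph is by definition a Cayley graph. Recall that $\Gamma(\mathbf{N},S)$ is the Cayley graph $\mathrm{Cay}(G,S)$ on the abelian group $G=\Z_{n_1}\times\Z_{n_2}\times\cdots\times\Z_{n_k}$ with connection set $S$, while a circulant graph $C(n,S')$ is $\mathrm{Cay}(\Z_n,S')$. So it suffices to produce a group isomorphism $\phi\colon G\to\Z_n$ and to show that it carries $S$ to a legitimate circulant connection set $S'=\phi(S)\subseteq\Z_n$; the induced vertex bijection is then a graph isomorphism because $\mathbf{x}-\mathbf{y}\in S$ if and only if $\phi(\mathbf{x})-\phi(\mathbf{y})\in\phi(S)=S'$.

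The first step is to record the elementary general principle: if $\phi\colon G\to H$ is an isomorphism of finite abelian groups and $S\subseteq G\setminus\{\mathbf{0}\}$ satisfies $S=-S$, then $\phi(S)\subseteq H\setminus\{\mathbf{0}\}$ and $\phi(S)=-\phi(S)$, and $\phi$, viewed as a relabelling of vertices, gives a graph isomorphism $\mathrm{Cay}(G,S)\cong\mathrm{Cay}(H,\phi(S))$. This is immediate: $\phi$ is a bijection on the vertex sets, $\phi(\mathbf{0})=\mathbf{0}$ so the forbidden element stays forbidden, $\phi(-s)=-\phi(s)$ so symmetry of the connection set is preserved, and $\phi(\mathbf{x}-\mathbf{y})=\phi(\mathbf{x})-\phi(\mathbf{y})$ so adjacency is preserved in both directions.

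The second step applies CRT to the additive groups. When $n_1,\ldots,n_k$ are pairwise relatively prime with $n=\prod_{i=1}^{k}n_i$, the map $\psi\colon\Z_n\to\Z_{n_1}\times\cdots\times\Z_{n_k}$ sending $a$ to $(a\bmod n_1,\ldots,a\bmod n_k)$ is a ring isomorphism, hence in particular an isomorphism of the underlying additive groups. Taking $\phi=\psi^{-1}$ and $S'=\psi^{-1}(S)\subseteq\Z_n$ in the principle above yields
$\Gamma(\mathbf{N},S)=\mathrm{Cay}(\Z_{n_1}\times\cdots\times\Z_{n_k},S)\cong\mathrm{Cay}(\Z_n,S')=C(n,S')$,
which is a circulant graph on $n$ vertices, as claimed.

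I do not expect a substantive obstacle; the only delicate point is the reading of the hypothesis. The phrase ``distinct relatively prime integers'' must be taken to mean \emph{pairwise} relatively prime (which is exactly the situation of Leighton's special case of distinct primes in Proposition~\ref{thm:prime}), since CRT, and with it the whole argument, breaks down otherwise: for instance $\mathbf{N}=(2,2,3)$ has a triple $\gcd$ of $1$ but $\Z_2\times\Z_2\times\Z_3\not\cong\Z_{12}$. It would also be worth noting, as a remark after the proof rather than inside it, that the new connection set $S'$ is obtained from $S$ elementwise through the explicit CRT correspondence, which is precisely what makes this reduction useful for pruning the MDC code search discussed earlier.
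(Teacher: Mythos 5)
Your proof is correct and takes essentially the same route as the paper's: both arguments transport the Cayley-graph structure through a group isomorphism $\Z_{n_1}\times\cdots\times\Z_{n_k}\cong\Z_n$, the paper simply writing this isomorphism explicitly as $\phi(x,y)=nx+my$ and carrying out only the case $k=2$, whereas your CRT formulation handles all $k$ at once. Your caveat that ``relatively prime'' must be read as \emph{pairwise} coprime is well taken and applies equally to the theorem as stated in the paper.
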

	
	\begin{proof}
		We will prove the result for $k = 2$. Let $m$ and $n$ be relatively prime integers. Define the function $\phi$ such that
		\begin{eqnarray*}
			\phi : \Z_m \times \Z_n \rightarrow \Z_{mn}\\
			\phi (x, y) = nx + my		
		\end{eqnarray*}
		Clearly, $\phi$ is a bijection from $\Z_m \times \Z_n $ to $\Z_{mn}$. Let $\Gamma = \Gamma((m, n), S)$ be a MDC graph indexed by the vertex set 
		$\Z_m \times \Z_n $. We will show that there exists a corresponding isomorphic circulant graph $C(mn, \phi(S))$.\\ 
		Given $S$ is a defining set for the MDC graph $\Gamma((m, n), S)$, we first need to show that $\phi(S)$ is a defining set for the circulant graph  $C(mn, \phi(S))$. 
		Let $(s_1, s_2), (t_1, t_2)\in S$ such that $(s_1, s_2) +  (t_1, t_2) = (0, 0)$. Then $\phi((s_1, s_2)) +  \phi((t_1, t_2)) = (ns_1+ms_2)+(nt_1+mt_2) = n(s_1+t_1) + m(s_2+t_2) = 0$. 
		Consequently, if $(t_1, t_2)$ is the additive inverse of $(s_1, s_2)\in S$ then $ \phi((t_1, t_2))$ is the additive inverse of  $\phi((s_1, s_2)) \in \phi(S)$, and $\phi(S)$ is a defining set for $C(mn, \phi(S))$.\\
		Let ${\bf x} =(x_1, x_2)$ and ${\bf y} = (y_1, y_2)$ be two adjacent vertices in $\Gamma$. Then ${\bf x} - {\bf y} = (x_1 - y_1, x_2 - y_2) = (s_1, s_2) \in S$.  Now $\phi(s_1, s_2) = ns_1+ms_2 \in \phi(S)$. 
		Further, $\phi(x_1, x_2) = nx_1 + mx_2$, $\phi(y_1, y_2) = ny_1 + my_2$, and $\phi(x_1, x_2) - \phi(y_1, y_2) = (nx_1 + mx_2) - (ny_1 + my_2) = n(x_1-y_1) + m(x_2-y_2) = ns_1+ms_2 \in \phi(S)$. Therefore, we have shown that if two vertices are adjacent in the MDC graph $\Gamma$ then their images under $\phi$ are adjacent in the circulant graph $C(mn, \phi(S))$ implying $\Gamma((m,n, S) \equiv C(mn, \phi(S))$. 
		, 
	\end{proof}
	
		This allows us to establish a pattern that determines the total number of distinct sets $\mathbf{N}$ for a multidimensional circulant graph with a particular number of vertices.
		\begin{corollary}
			The number of non-isomorphic sets $\mathbf{N}$ for a MDC graph $\Gamma (N,S)$ with a particular number of vertices $n$ is equivalent to the number of ways to represent $n$ as a product of prime powers. 
		\end{corollary}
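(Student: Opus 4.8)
The plan is to pass from the (deliberately informal) notion of ``non-isomorphic $\mathbf{N}$'' to a concrete equivalence relation on tuples and then count its classes by a numerical invariant. Throughout, write $\mathbf{N}=(n_1,\dots,n_k)$ with each $n_i\ge 2$ and $n_1\le\dots\le n_k$, and call two such tuples $\mathbf{N},\mathbf{N}'$ with $\prod_i n_i=\prod_i n_i'=n$ \emph{equivalent} if one is obtained from the other by finitely many moves, each replacing two coprime components $n_i,n_j$ by the single component $n_in_j$, or performing the reverse split. The first step is to observe that such a move is realized by a group isomorphism: the Chinese Remainder map $\Z_{n_i}\times\Z_{n_j}\cong\Z_{n_in_j}$ — precisely the map $\phi$ from the proof of Theorem~\ref{thm:coprime}, applied in coordinates $i,j$ and the identity elsewhere — carries $\Z_{n_1}\times\dots\times\Z_{n_k}$ isomorphically onto the group attached to $\mathbf{N}'$, hence induces a graph isomorphism $\mathrm{Cay}(\cdot,S)\cong\mathrm{Cay}(\cdot,\phi(S))$ and a bijection between admissible connection sets (the conditions $S=-S$ and $\mathbf{0}\notin S$ are preserved because $\phi$ is a group isomorphism). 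Thus equivalent tuples carry exactly the same MDC graphs up to isomorphism, and it is the equivalence classes that must be enumerated.

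Next I would attach to each tuple the invariant that records, for every prime $p\mid n$, the multiset $\lambda_p(\mathbf{N})=\{\,v_p(n_i): 1\le i\le k,\ v_p(n_i)>0\,\}$ of positive $p$-adic valuations of its components; since $\sum_i v_p(n_i)=v_p(n)$, this $\lambda_p(\mathbf{N})$ is a partition of $v_p(n)$. A short check shows $\lambda_p$ is unchanged by the moves: when two coprime $n_i,n_j$ are merged, $p$ divides at most one of them, so $v_p(n_in_j)=v_p(n_i)+v_p(n_j)$ equals whichever valuation is nonzero (if any), and the multiset of positive valuations is unaffected; splitting is the inverse. Hence $\mathbf{N}\mapsto(\lambda_p(\mathbf{N}))_{p\mid n}$ is constant on equivalence classes.

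The third step is to show this invariant is complete and surjective. Any component $n_i$ that is not a prime power factors as $n_i=p^{a}m$ with $p\nmid m$, $m\ge 2$, and $\gcd(p^{a},m)=1$, so one split replaces it by $p^{a},m$; iterating, every tuple is equivalent to its \emph{fully split form}, the sorted tuple whose components are the prime powers $p^{\,j}$ with $p\mid n$ and $j$ a part of $\lambda_p(\mathbf{N})$. (Every such $p^j\ge 2$ and their product is $\prod_{p\mid n}p^{v_p(n)}=n$, so this is a legitimate tuple; and in each individual move the two affected components are coprime, hence distinct, so Theorem~\ref{thm:coprime} genuinely applies even when the ambient tuple repeats values.) Since the fully split form is determined by the data $(\lambda_p)_{p\mid n}$ alone, two tuples with the same invariant have the same fully split form and are therefore equivalent; together with the previous step this gives a bijection between equivalence classes and arbitrary families $(\lambda_p)_{p\mid n}$ of partitions $\lambda_p\vdash v_p(n)$, surjectivity being witnessed by the fully split form of that data. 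Finally, grouping a multiset of prime powers with product $n$ by the underlying prime identifies such multisets with exactly such families of partitions, so the number of equivalence classes equals the number of ways to write $n$ as a product of prime powers (and equals $\prod_{p\mid n}P(v_p(n))$, $P$ the partition function).

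The main obstacle is not a computation but making the claim precise: one must commit to the equivalence relation generated by the coprime merge/split moves — this is where Theorem~\ref{thm:coprime} enters — and then prove that $\lambda_p$ is a \emph{complete} invariant for it, the crux being the reduction of an arbitrary tuple to its fully split form. Invariance of $\lambda_p$, realizability of every invariant, and the bijection with prime-power factorizations are then routine.
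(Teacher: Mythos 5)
Your proposal is correct, and it supplies a complete argument where the paper offers essentially none: the corollary is stated without proof and is justified only by the $n=36$ example, the intended reading being exactly the one you formalize, namely that Theorem~\ref{thm:coprime} lets one identify any tuple with the result of merging or splitting coprime components. What you add beyond the paper is the actual counting mechanism: turning the coprime merge/split moves into an explicit equivalence relation, checking that each move is realized by a group isomorphism (so admissible connection sets and the resulting Cayley graphs are carried along), and then proving the count via the complete invariant $\lambda_p(\mathbf{N})$ --- the partition of $v_p(n)$ recorded by the positive $p$-adic valuations of the components --- with the fully split form as canonical representative. This yields the closed form $\prod_{p\mid n}P\bigl(v_p(n)\bigr)$ with $P$ the partition function, which indeed gives $4$ for $n=36$ as in the paper's example. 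One caveat is worth making explicit: your count is of equivalence classes under the coprime identification only, and the paper's later result that graphs on $(4,2,\dots,2)$ are isomorphic to graphs on $(2,2,\dots,2)$ shows that distinct classes in this sense can still produce isomorphic graphs; so ``non-isomorphic sets $\mathbf{N}$'' in the corollary must be read as ``inequivalent via Theorem~\ref{thm:coprime}'' rather than ``yielding pairwise non-isomorphic graphs.'' Your write-up correctly claims only the safe direction (equivalent tuples carry the same graphs), so no repair is needed, but the statement itself is looser than what is actually proved.
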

		
		\begin{example}
			Consider  $n=36$: The prime factorization of $n = 2^2\cdot 3^2$. Then there are $4$ distinct sets that generate non-isomorphic MDC graphs.
			\begin{center}
				$n = 3^2 * 2^2  \rightarrow N = (9,4) \leftrightarrow N = (36)$\\
				\vspace{.1cm}
				$n = 3^1*3^1*2^1*2^1 \rightarrow N = (3,3,2,2) \leftrightarrow (6,3,2) \leftrightarrow (6,6)$\\
				\vspace{.1cm}
				$n = 3^2*2^1*2^1 \rightarrow N = (9,2,2) \leftrightarrow (18,2)$\\
				\vspace{.1cm}
				$n = 3^1*3^1*2^2 \rightarrow (3,3,4) \leftrightarrow  (3,12)$\\
			\end{center}
			
		\end{example}

	We also can observe isomorphisms among graphs with the same vertex set. One such family of isomorphic graphs is given by the following theorem.
	\begin{theorem}
		
		The multidimensional circulant graphs \mdc and $\Gamma({\bf N}, \sigma(S))$ are isomorphic, where 
		\begin{align*}
			\sigma\colon \Z_{n_1}\times \Z_{n_2}\times \cdots \Z_{n_k} & \longrightarrow\mathbf \Z_{n_1}\times \Z_{n_2}\times \cdots \Z_{n_k} \\[-1ex]
			(a_1, a_2, \ldots, a_k) & \longmapsto (\alpha_{1}a_{1}, \alpha_{2}a_{2}, \ldots, \alpha_{k}a_{k}).
		\end{align*}
		and $\alpha_1\in \Z_{n_1}, \alpha_2 \in \Z_{n_2}, \ldots \alpha_{k}\in \Z_{n_k}$ are units.
	\end{theorem}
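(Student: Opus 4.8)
The plan is to exhibit an explicit isomorphism induced by the coordinatewise multiplication map $\sigma$ itself. Define $\psi\colon V(\Gamma(\mathbf{N},S)) \to V(\Gamma(\mathbf{N},\sigma(S)))$ by $\psi(v_1,\ldots,v_k) = (\alpha_1 v_1,\ldots,\alpha_k v_k)$; this is the same rule as $\sigma$, now viewed as a relabeling of the vertex set (which in both graphs equals $\Z_{n_1}\times\cdots\times\Z_{n_k}$). First I would check that $\psi$ is a bijection: since each $\alpha_i$ is a unit in $\Z_{n_i}$, the map $v_i \mapsto \alpha_i v_i$ is a bijection of $\Z_{n_i}$ with inverse $v_i \mapsto \alpha_i^{-1} v_i$, so $\psi$ is bijective with inverse given by coordinatewise multiplication by the $\alpha_i^{-1}$.

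Second, I would verify that $\sigma(S)$ is a legitimate defining set for an MDC graph on $\mathbf{N}$, i.e. that $\mathbf{0}\notin\sigma(S)$ and $\sigma(S) = -\sigma(S)$. The first holds because $\sigma$ is a bijection with $\sigma(\mathbf{0})=\mathbf{0}$ while $\mathbf{0}\notin S$. The second holds because $\sigma$ is additive on the product group, $\sigma(\mathbf{a}+\mathbf{b}) = \sigma(\mathbf{a})+\sigma(\mathbf{b})$ computed coordinatewise, hence $\sigma(-\mathbf{a}) = -\sigma(\mathbf{a})$, so $-\sigma(S) = \sigma(-S) = \sigma(S)$.

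Third, the core step: I would show $\psi$ preserves and reflects adjacency. For vertices $\mathbf{x},\mathbf{y}$, the key identity is $\psi(\mathbf{x}) - \psi(\mathbf{y}) = \sigma(\mathbf{x}-\mathbf{y})$, which follows coordinatewise from $\alpha_i x_i - \alpha_i y_i = \alpha_i(x_i - y_i)$ in $\Z_{n_i}$. Then $\mathbf{x}\sim\mathbf{y}$ in $\Gamma(\mathbf{N},S)$ iff $\mathbf{x}-\mathbf{y}\in S$ iff $\sigma(\mathbf{x}-\mathbf{y})\in\sigma(S)$ (using injectivity of $\sigma$) iff $\psi(\mathbf{x})-\psi(\mathbf{y})\in\sigma(S)$ iff $\psi(\mathbf{x})\sim\psi(\mathbf{y})$ in $\Gamma(\mathbf{N},\sigma(S))$, which together with the bijectivity of $\psi$ gives the isomorphism.

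I do not anticipate a serious obstacle; the only point needing care is keeping straight that $\sigma$ plays two roles — as a self-map of the additive group $\Z_{n_1}\times\cdots\times\Z_{n_k}$ when we write $\sigma(S)$, and as a vertex relabeling when we write $\psi$ — and that the additivity and injectivity invoked hold precisely because each $\alpha_i$ is a unit modulo $n_i$. Unlike the proof of Theorem~\ref{thm:coprime}, there is no need to reduce to $k=2$ first: the argument above is already uniform in $k$.
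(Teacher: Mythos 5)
Your proof is correct and takes essentially the same route as the paper: both use $\sigma$ itself as the vertex bijection (with inverse given by the $\alpha_i^{-1}$) and the identity $\sigma(\mathbf{x})-\sigma(\mathbf{y})=\sigma(\mathbf{x}-\mathbf{y})$ to transfer adjacency. Your version is in fact slightly more complete, since you also verify that $\sigma(S)$ is a legitimate defining set and argue both directions of the adjacency equivalence, which the paper leaves implicit.
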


	\begin{proof}
		Because all $\alpha$'s are units, $\sigma$ has a well-defined inverse,
		$\sigma^{-1}((a_1,  \ldots, a_k)) = (\alpha_{1}^{-1}a_{1}, \ldots, \alpha_{k}^{-1}a_{k})$ and is bijective. 
		Let ${\bf a} = (a_1, \ldots, a_k)$ and ${\bf b} = (b_1, \ldots, b_k)$ be two vertices of \mdc. We know  ${\bf a} \sim {\bf b}$ if and only if ${\bf a} - {\bf b} \in S$ and ${\bf a} - {\bf b}  = (a_1-b_1, \ldots, a_k - b_k)\in S$. Observe that $\sigma({\bf a}) - \sigma({\bf b}) =  (\alpha_1a_1-\alpha_1b_1, \ldots, \alpha_ka_k-\alpha_kb_k)  = (\alpha_1(a_1-b_1), \ldots, \alpha_k(a_k-b_k)) \in \sigma(S)$, implying $\sigma({\bf a}) \sim \sigma({\bf b})$ in $\Gamma({\bf N}, \sigma(S))$.
	\end{proof}

		In some special cases, there may be an isomorphism between different  $\mathbf{N}$'s even though the elements of $\mathbf{N}$ are not coprime. One example is the cube graph:
		\begin{example}
			The $3$-cube graph can be represented in two different ways: $\Gamma((4,2),\{(0,1),(1,0),(3,0)\})$ and $\Gamma((2,2,2),\{(0,0,1),(0,1,0),(1,0,0)\})$.
		\end{example}
		
		This isomorphism may be generalized to more dimensions, according to the following theorem:
		\begin{theorem}
			Every $k$-dimensional circulant graph with $\mathbf{N} = (4,2,2,\dotsc,2)$ is isomorphic to a $(k+1)$-dimensional circulant graph with $\mathbf{N} = (2,2,2,2,\dotsc,2)$.
		\end{theorem}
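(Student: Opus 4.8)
The plan is to exhibit an explicit vertex bijection $\psi\colon \Z_4\times\Z_2^{k-1}\to\Z_2^{k+1}$ and to compute precisely how it transforms the defining set. Write a vertex of the $k$-dimensional graph as $(a_1,a_2,\dots,a_k)$ with $a_1\in\Z_4$ and $a_2,\dots,a_k\in\Z_2$, and set $\psi(a_1,a_2,\dots,a_k)=(\beta_0,\beta_1,a_2,\dots,a_k)$, where $(\beta_0,\beta_1)\in\{0,1\}^2$ is the base-$2$ expansion of $a_1$, i.e.\ $a_1=2\beta_0+\beta_1$. This is a bijection because base-$2$ expansion is a bijection $\{0,1,2,3\}\to\{0,1\}^2$. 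It is worth emphasizing up front why this cannot be handled like Theorem~\ref{thm:coprime} or the unit-multiplication isomorphism: $\Z_4\not\cong\Z_2^2$, so $\psi$ is \emph{not} a group homomorphism and one cannot simply transport the Cayley structure; differences must be tracked by hand.

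First I would record an elementary dictionary for $\psi$ on the first coordinate. For $x,y\in\Z_4$ with base-$2$ expansions $(\beta_0,\beta_1)$, $(\gamma_0,\gamma_1)$: if $x-y=0$ then $(\beta_0-\gamma_0,\beta_1-\gamma_1)=(0,0)$; if $x-y=2$ then it equals $(1,0)$; and if $x-y\in\{1,3\}$ then it lies in $\{(0,1),(1,1)\}$, with both values genuinely occurring (compare $1-0$ with $2-1$). Crucially all three statements are \emph{equivalences}: conversely a first-pair $\Z_2^2$-difference of $(0,0)$ forces $x=y$, a difference of $(1,0)$ forces $x-y=2$, and a difference of $(0,1)$ or $(1,1)$ forces $x-y\in\{1,3\}$. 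This is a finite check over $16$ ordered pairs, and note the four patterns $(0,0),(1,0),(0,1),(1,1)$ are each associated with a unique case.

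Next I would define the target connection set. For each $\mathbf{s}=(s_1,s_2,\dots,s_k)\in S$, writing $\mathbf{s}^{*}=(s_2,\dots,s_k)\in\Z_2^{k-1}$, put into $S'$: the element $(0,0,\mathbf{s}^{*})$ if $s_1=0$; the element $(1,0,\mathbf{s}^{*})$ if $s_1=2$; and \emph{both} elements $(0,1,\mathbf{s}^{*})$ and $(1,1,\mathbf{s}^{*})$ if $s_1\in\{1,3\}$. Then $S'=-S'$ automatically in $\Z_2^{k+1}$, and $\mathbf{0}\notin S'$ (an element coming from $s_1=0$ has $\mathbf{s}^{*}\neq\mathbf{0}$ because $\mathbf{0}\notin S$, while the others carry a $1$ in one of the first two coordinates), so $\Gamma((2,\dots,2),S')$ is a legitimate MDC graph.

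Finally I would check that $\psi$ is a graph isomorphism $\Gamma((4,2,\dots,2),S)\to\Gamma((2,\dots,2),S')$, i.e.\ $\mathbf{x}-\mathbf{y}\in S\iff\psi(\mathbf{x})-\psi(\mathbf{y})\in S'$. The forward direction is immediate from the difference dictionary combined with the definition of $S'$. For the converse, suppose $\psi(\mathbf{x})-\psi(\mathbf{y})=\mathbf{t}\in S'$: the last $k-1$ coordinates of $\mathbf{t}$ read off $\mathbf{s}^{*}=(x_2-y_2,\dots,x_k-y_k)$, and its first two coordinates pin $x_1-y_1$ to $0$, to $2$, or to a member of $\{1,3\}$ as above, so that $\mathbf{x}-\mathbf{y}$ is forced to be $(0,\mathbf{s}^{*})$, $(2,\mathbf{s}^{*})$, or one of $(1,\mathbf{s}^{*}),(3,\mathbf{s}^{*})$; in the first two cases membership in $S$ is exactly the defining condition that put $\mathbf{t}$ into $S'$, and in the last case the hypothesis $S=-S$ — which makes $(1,\mathbf{s}^{*})\in S$ equivalent to $(3,\mathbf{s}^{*})\in S$ — guarantees that whichever one equals $\mathbf{x}-\mathbf{y}$ lies in $S$. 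I expect the only step requiring any thought to be precisely this $\{1,3\}$ case: since $\psi$ does not separate the two cycle-generators $1$ and $3$ of $\Z_4$ at the level of differences, the argument closes only because the symmetry $S=-S$ renders $1\in S$ and $3\in S$ equivalent, neutralizing the two-valued ambiguity. (Equivalently: $\psi$ carries the $2^{k+2}$ ordered pairs whose first-coordinate difference lies in $\{1,3\}$ bijectively onto the $2^{k+2}$ ordered pairs whose first two $\Z_2^2$-coordinates differ by $(0,1)$ or $(1,1)$, so the image edge set is genuinely difference-invariant in $\Z_2^{k+1}$.)
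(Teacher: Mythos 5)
Your proof is correct and takes essentially the same route as the paper: an explicit bijection $\mathbb{Z}_4\to\mathbb{Z}_2^2$ (you use the binary expansion where the paper uses a Gray-code ordering) extended identically on the remaining coordinates, followed by a case analysis on the first-coordinate difference being $0$, $2$, or $\pm 1$. Your explicit handling of the two-valued ambiguity in the $\{1,3\}$ case via $S=-S$ is in fact more careful than the paper's treatment of the corresponding step.
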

		\begin{proof}
			Let $\Gamma ((4,2,2,\dotsc,2),S)$ be a multidimensional circulant graph. Now, we define a transformation $\phi(s): \mathbb{Z}_4 \times \mathbb{Z}_2^{k-1} \rightarrow \mathbb{Z}_2^{k+1}$ as follows: If $s = (s_1, s_2,\cdots,s_k)$, then let
			\[ \phi(s) = \begin{cases} 
				(0,0,s_2,\cdots,s_k) & s_1 = 0 \\
				(1,0,s_2,\cdots,s_k) & s_1 = 1 \\
				(1,1,s_2,\cdots,s_k) & s_1 = 2\\
				(0,1,s_2,\cdots,s_k) & s_1 = 3.
			\end{cases}
			\]
			As $\phi$ forms a bijective mapping on the first two coordinates and leaves the rest the same, $\phi$ is bijective on the vertex sets from $\mathbf{N} = (4,2,2,\dotsc,2)$ to $\mathbf{N} = (2,2,2,2,\dotsc,2)$.
			
			Now, using this mapping, we will show that applying $\phi$ to the elements of $S$ results in a homomorphism. Suppose two vertices $v_1$ and $v_2$ are adjacent in $\Gamma((4,2,2,\dotsc,2),S)$. Then, in $\Gamma ((2,2,2,2,\dotsc,2),\phi(S))$,  we have 3 cases.
			\begin{enumerate}
				\item If the first coordinates of $v_1$ and $v_2$ are the same, the first two coordinates of $\phi(v_1)$ and $\phi(v_2)$ are the same, so $(0,0,s_2,\cdots,s_k) \in S$ connects the two in $\Gamma ((2,2,2,2,\dotsc,2),\phi(S))$.
				\item If the first coordinates of $v_1$ and $v_2$ differ by one, exactly one of the first two coordinates of $\phi(v_1)$ and $\phi(v_2)$ will differ, so either $(1,0,s_2,\cdots,s_k) \in S$ or $(0,1,s_2,\cdots,s_k) \in S$ connects the two in $\Gamma ((2,2,2,2,\dotsc,2),\phi(S))$.
				\item If the first coordinates of $v_1$ and $v_2$ differ by two, both of the first two coordinates of $\phi(v_1)$ and $\phi(v_2)$ will differ, so $(1,1,s_2,\cdots,s_k) \in S$ connects the two in $\Gamma ((2,2,2,2,\dotsc,2),\phi(S))$.
			\end{enumerate}
			In any case, $\phi(v_1)$ and $\phi(v_2)$ are adjacent, and as the cases exactly cover the coordinates of $\phi(v_1)$ and $\phi(v_2)$, if $\phi(v_1)$ and $\phi(v_2)$ are adjacent, then $v_1$ and $v_2$ will be as well. Therefore, $\Gamma ((4,2,2,\dotsc,2),S)$ is isomorphic to $((2,2,2,2,\dotsc,2),\phi(S))$. 
		\end{proof}

	
	\section{Self-dual additive codes from MDC graphs}\label{sec:additive}
	
		In this section, we use our own unique construction of multidimensional circulant graphs to generate zero-dimensional quantum error correcting codes, which we represent as self-dual additive codes over $\F_4$.
		Most of the qubit codes of length $n=1$ through $n=30$ found on \url{http://www.codetables.de} are extremal, meaning these codes meet their appropriate bound. However, we attempt to improve the minimum distance of non-extremal codes using MDC construction.
		We find two new $0$-dimensional qubit codes of lengths $77$ and $90$ with respective minimum distances $19$ and $22$, improving upon previous best-known minimum distances by $1$.

	An additive code $C$ of length $n$ over $\F_4 $ is an additive subgroup of $\F_{4}^{n}$.
	An element ${\bf c}$ of $C$ is called a codeword of $C$. The weight of a vector ${\bf u} \in \F_{4}^{n}$ is the number of nonzero entries of ${\bf u}$. The least nonzero weight of all codewords in $C$ is called the minimum distance of $C$. If $C$ is an additive code of length $n$ over $\F_4$ with minimum distance $d$ and size $2^k$, then $C$ is denoted by  $(n, 2^k, d)_4$. 
	The weight distribution of $C$ is the set $\{W_0, W_1, \ldots, W_r \}$, where $0\le r \le n$ and $W_{j} = W_{j}(C)$ is the number of codewords of weight $j$ in $C$.

	Given two vectors ${\bf u} = (u_1, u_2, \ldots, u_n)$ and ${\bf v} = (v_1, v_2, \ldots, v_n)$ in $\F_{4}^{n}$, the Hermitian trace inner product of ${\bf u}$ and ${\bf v}$ is defined by 
	\begin{eqnarray*}
		{\bf u}*{\bf v} = \sum_{i = 1}^{n} u_{i}v_{i}^{2} + u_{i}^{2}v_{i}.
	\end{eqnarray*}
	
	The symplectic dual $C^{*}$ of an additive code $C$ is given by $C^{*} = \{{\bf u}\in \F_{4}^{n} \mid {\bf u}*{\bf c} = 0\; \mbox{for all}\; {\bf c}\in C\}$. An additive code $C$ is called symplectic self-dual if $C = C^{*}$.

	Schlingemann~\cite{schl} and later Danielsen~\cite{Danielsen2006} showed that every self-dual additive code over $\F_4$ can be represented by a graph. In particular, if 
	$C(\Gamma)$ denotes the additive code generated by the row span of the matrix $A(\Gamma) + \omega\cdot I$, where $A(\Gamma)$ is the adjacency matrix of a graph $\Gamma$ and $I$ is the identity matrix, then $C(\Gamma)$ is symplectic self-dual.
	
	There were several studies involving classification of additive self-dual codes over $\F_4$. First, Danielsen and Parker~\cite{Danielsen2006} did a complete classification  for lengths $n\le 12$. Later, Gulliver and Kim~\cite{GulKim}, Grassl and Harada~\cite{Grassl2017}, and Saito~\cite{Saito2019} contributed to the classification of lengths up to $n \le 50$.  These studies centered on additive codes from one-dimensional circulant graphs with circulant adjacency and bordered matrices. In this work, we consider multidimensional circulant graphs and expand the search space for new qubit codes.
	
	An additive self-dual code $C$ over $\F_4$ is called Type $II$ if the weights of all the codewords in $C$ is even. A code which is not Type $II$ is called Type $I$. Any Type $II$ code must have even length. 
	We classify Type $I$ and Type $II$ additive self-dual codes from MDC graphs by the following result.
	
	\begin{lemma}\label{lem:size}
		Let $\Gamma = $ \mdc be a MDC graph that produces the additive self-dual code $C_{\Gamma}$. Then $C_{\Gamma}$ is Type $II$ if and only if $\mid S \mid$ is odd.
	\end{lemma}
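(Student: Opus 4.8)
The plan is to reduce the Type~II property to a condition on the weights of the rows of the generator matrix $G = A(\Gamma) + \omega I$ of $C_\Gamma$, and then to read off those row weights from the structure of $A(\Gamma)$.

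The first and only substantive ingredient is the parity identity
\[
\operatorname{wt}(\bu+\bv) \equiv \operatorname{wt}(\bu) + \operatorname{wt}(\bv) + \bu * \bv \pmod 2
\qquad (\bu,\bv \in \F_4^n).
\]
To prove it I would write $\bu = \ba + \omega\bb$ and $\bv = \bc + \omega\mathbf{d}$ with $\ba,\bb,\bc,\mathbf{d}\in\F_2^n$, use $\operatorname{wt}(\bu) = \operatorname{wt}(\ba)+\operatorname{wt}(\bb)-\operatorname{wt}(\ba\wedge\bb)$ (the support of $\bu$ is the union of the supports of $\ba$ and $\bb$), reduce modulo $2$ so that $\operatorname{wt}(\bu)\equiv\operatorname{wt}(\ba)+\operatorname{wt}(\bb)+\ba\cdot\bb$, and expand the analogous expression for $\bu+\bv=(\ba+\bc)+\omega(\bb+\mathbf{d})$; the diagonal terms reassemble $\operatorname{wt}(\bu)+\operatorname{wt}(\bv)$ and the surviving cross terms are exactly $\ba\cdot\mathbf{d}+\bb\cdot\bc$, which is the standard binary expression for the Hermitian trace form $\bu*\bv$. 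Since $C_\Gamma$ is symplectic self-dual, $\bu*\bv = 0$ for all $\bu,\bv\in C_\Gamma$, so $\bu\mapsto\operatorname{wt}(\bu)\bmod 2$ is a group homomorphism $C_\Gamma\to\Z_2$. Its kernel is precisely the set of even-weight codewords, so $C_\Gamma$ is Type~II exactly when this homomorphism is trivial, i.e.\ exactly when every generator (row of $G$) has even weight.

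It then remains to compute the weight of a row of $G = A(\Gamma)+\omega I$. The graph $\Gamma(\mathbf N,S)$ is a Cayley graph on $\Z_{n_1}\times\cdots\times\Z_{n_k}$ and is therefore regular of degree $|S|$ — equivalently, by Theorem~\ref{thm:nested}, $A(\Gamma)$ is nested block circulant and hence has constant row weight $|S|$. Each row of $G$ thus consists of $|S|$ ones (at the neighbours of the corresponding vertex) together with a single off-support entry $\omega$ on the diagonal, so its weight is $|S|+1$. Combining this with the previous paragraph, $C_\Gamma$ is Type~II if and only if $|S|+1$ is even, i.e.\ if and only if $|S|$ is odd. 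The main obstacle is the parity identity itself; everything else is bookkeeping, the one point needing care being the verification that the leftover cross terms in the weight expansion coincide with $\bu*\bv$, which is just the identification of the trace inner product on $\F_4^n$ with the symplectic form on $\F_2^{2n}$.
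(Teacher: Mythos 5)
Your proof is correct. The paper disposes of this lemma in two lines by citing Danielsen and Parker for the fact that the self-dual additive code of a regular graph is Type~II if and only if every vertex has odd degree, and then observing that an MDC graph is regular of valency $|S|$; you instead prove that graph-theoretic criterion from scratch. The engine of your argument --- the congruence $\operatorname{wt}(\bu+\bv)\equiv\operatorname{wt}(\bu)+\operatorname{wt}(\bv)+\bu*\bv\pmod 2$, which makes $\operatorname{wt}\bmod 2$ a group homomorphism on any symplectic self-orthogonal code --- is exactly what underlies the cited result, and your verification checks out: with $u=a+\omega b$ and $v=c+\omega d$ one has $uv^2+u^2v=ad+bc\in\F_2$, so the leftover cross terms $\ba\cdot\mathbf{d}+\bb\cdot\bc$ in the weight expansion are indeed $\bu*\bv$, and self-duality kills them on the code. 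Two minor remarks: regularity of degree $|S|$ follows immediately from $\Gamma$ being a Cayley graph (the neighbours of $\bx$ are $\bx+\bs$ for $\bs\in S$, and $\0\notin S$ keeps the diagonal of $A(\Gamma)$ zero so the row weight of $A(\Gamma)+\omega I$ really is $|S|+1$), so invoking the nested block-circulant structure of Theorem~\ref{thm:nested} is unnecessary; and your argument actually establishes the criterion for arbitrary graphs, not only regular ones, since it reduces Type~II to every row of the generator matrix having even weight, i.e.\ every vertex having odd degree. What your route buys is self-containedness; what the paper's buys is brevity.
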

	
	\begin{proof}
		It was shown in~\cite{Danielsen2006} that self-dual additive code $C$  generated by a regular graph $\Gamma$ is Type $II$ if and only if all vertices of $\Gamma$ have odd degrees.  MDC graphs are regular with valency $|S|$ and the result follows.
	\end{proof}

	We ran an exhaustive search to generate every possible self-dual additive code that could be generated with MDC graphs when $n$ ranges from $4$ to $40$. To conclude this section, we present a table that provide a comparative study between the qubit codes generated with MDC graphs, and qubit codes generated with circulant graphs.
	
	In the table $1$, $n$ denotes the number of vertices (length of the additive code), $N$ signifies distinct sets, $d_{\textit{max}}^{mdc}(n)$ indicates the maximum minimum distance among self-dual additive codes generated by MDC graphs, $d_{\textit{max}}^{c}(n)$ denotes the maximum known minimum distance generated using circulant graphs \cite{Saito2019}, and $d_{max}(n,0)$ means the maximum known minimum distance among all qubit codes with distance $n$ and dimension $0$~\cite{Grassl:codetables}.
	
	\begin{table}[hbt!]
		\centering
		\setlength{\tabcolsep}{2pt}
		\begin{tabular}{|ccccc|ccccc|}
			\hline
			$n$ & $N$ & $d_{max}^{mdc}(n)$ & $d_{\textit{max}}^{c}(n)$ & $d_{max}(n,0)$ & $n$ & $N$ & $d_{\textit{max}}^{mdc}(n)$ & $d_{\textit{max}}^{c}(n)$ & $d_{max}(n,0)$ \\
			\hline
			{4} & [2, 2] & 2 & 2 & 2 & {24} & [3, 8],[2, 12],[2,2, 6]& 8 & 8 & 8-10\\[1.5pt]				 			
			{6} & [2, 3] & 4 & 4 & 4 & {26} & [2, 13] & 8 & 8 & 8-10\\[1.5pt] 
			{8} & [2, 4],[2, 2, 2] & 4 & 4 & 4 & 	{27} & [3, 9] & 8 & 8 & {9-10} \\
			{} & {}& {}& {}&{}&{} & 					[3, 3, 3] & 6 & {} & {}\\ [1.5pt] 										 			
			{9} & [3, 3] & 4, 3& 4 & 4  & 		{28} & [4, 7],[2, 14] & 10, 8  & 10 & 10 \\ [1.5pt] 								 				
			{10} & [2, 5] & 4 & 4 & 4 & {30} & [3, 10] & 12 & 12 & 12\\[1.5pt] 						
			{12} & [3, 4],[2, 6] & 6,4 & 6 & 6 & {32}& [2, 16] & 10 & 10 & 10-12 \\
			{ } & { }& { } & { } & { } & {} & [8,4],[2, 2, 8][2, 2, 2, 4] & 8 &  { } & { }\\
			{ } & { }& { } & { } & { } & { } & [2, 4, 4] & 6 &{ } &{ }\\
			{ } & { }& { } & { } & { } & { } & [2, 2, 2, 2, 2] & 8 & {} & {} \\ [1.5pt] 
			{14} & [2, 7] & 6 & 6 & 6 &		{33} & [3, 11] & 10 & 10 & 10-12\\ [1.5pt] 
			{15} & [3, 5] & 6 & 6 & 6 & 	{34} & [2, 17] & 10  & 10 & 10-12\\ [1.5pt] 
			{16} & [2, 8] & 6 & 6 & 6 & 	{35} & [5, 7] & 10 & 10 &  11-13\\ 	
			{} & [4, 4],[2, 2, 4] & 4 & {} & {} & {} & {} & {} & {} &{}\\[1.5pt] 
			{ } & [2, 2, 2, 2] & 4 & { } & { } & {} & {} & {} & {} & {}\\[1.5pt]							
			{18} & [2, 9], [3, 6] & 6 & 6 & 8 & {36} & [2, 18] & 12 & 11 & 12-14\\[1.5pt]
			{} & {} & {} & {} & {} & {} & 			[4, 9]  & 11 & {} & {}\\ 														
			{ } & {}&{} & { } & { }& {} & [3, 12],[6, 6] & 10 & {} & {}\\[1.5pt]
			{20} & [4, 5], [2, 10] & 8 & 8 & 8 & 	{38} & [2, 19] & 12 & 12 & 12-14\\[1.5pt] 												
			{21} & [3, 7] & 7 & 7 & 8 & {39} & [3, 13] & 11 & 11 & 11-14\\[1.5pt] 
			{22} & [2, 11] & 8 & 8 & 8 & {40} &	[5, 8],[2, 20],[2,2, 10] & 12 & 12 & 12-14\\ 											
			\hline								
		\end{tabular} 
		\caption{Minimum distances of MDC graph codes.}
		
	\end{table}

	\newpage
	
	\section{New and optimal qubit codes}\label{sec:new}
	This section presents new and optimal symplectic self-dual additive codes generated from the MDC construction. We refer to a code $C$ as new if the minimum distance of $C$ is higher than the best known minimum distance available in the literature for the same length and the dimension. A code $C$ is called optimal if $C$ has the best known minimum distance among known codes with the same parameters.

		A central point of focus of our work involves comparing the quantum codes generated by multidimensional circulant graphs with the quantum codes generated by circulant graphs. We have thus far shown that MDC graphs have the potential to produce codes with better minimum distances than codes produced by circulant graphs. For example, the best minimum distance of codes generated by circulant graphs for $n=36$ is $11$, while the best minimum distance of codes generated by MDC graphs for $n=36$ is $12$. In the following proposition, we establish two non-isomorphic families of MDC graphs that yield $\dsb{36,0,12}$ qubit codes, adapting Proposition 2 in ~\cite{seneviratne}.

	\begin{proposition}
		The non-isomorphic MDC graphs
		\begin{equation*}
			\begin{split}
				\Gamma_{36, 1} & =\Gamma((2,18), [\{ 4, 5, 6, 7, 13, 14, 15, 16 \}, \{ 1, 3, 7, 13, 17 \}]\\
				\Gamma_{36, 2} & =\Gamma((2,18), [\{ 3, 7, 13, 17 \}, \{ 1, 5, 6, 8, 9, 11, 12, 14, 15 \}]
			\end{split}
		\end{equation*}	
		generates two inequivalent 	 $(36,2^{36},12)$, Type II additive self-dual codes $C_{36, 1}$ and $C_{36, 2}$ yielding two inequivalent $\dsb{36, 0, 12}$ qubit codes. 
		
	\end{proposition}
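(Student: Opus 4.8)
The plan is to verify the proposition by a combination of explicit computation and an appeal to the structural results already established. The statement has three separate parts: (i) the two MDC graphs $\Gamma_{36,1}$ and $\Gamma_{36,2}$ are non-isomorphic; (ii) each produces a $(36, 2^{36}, 12)$ Type II additive self-dual code via the $A(\Gamma) + \omega I$ construction; and (iii) the two resulting codes are inequivalent (hence give inequivalent $\dsb{36,0,12}$ qubit codes).

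First I would dispense with the self-dual and Type II claims, which are essentially automatic. By the Schlingemann--Danielsen correspondence recalled above, for any graph $\Gamma$ the code $C(\Gamma)$ generated by the rows of $A(\Gamma) + \omega I$ is symplectic self-dual; since each $\Gamma_{36,i}$ has $36$ vertices, $C_{36,i}$ has length $36$ and size $2^{36}$. For the Type II property I would invoke Lemma~\ref{lem:size}: a MDC graph $\Gamma(\mathbf{N}, S)$ yields a Type II code precisely when $|S|$ is odd. In the compressed notation of the Remark, $|S|$ is the total number of support elements listed, so for $\Gamma_{36,1}$ we get $8 + 5 = 13$ and for $\Gamma_{36,2}$ we get $4 + 9 = 13$, both odd, giving Type II in each case. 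The minimum-distance claim $d = 12$ is the one genuinely computational ingredient: I would state that it was verified by computing the weight enumerator (or at least the minimum weight) of each code directly, e.g.\ using Magma or a dedicated minimum-weight routine, exploiting the nested block circulant structure from Theorem~\ref{thm:nested} to speed up the search; this also confirms $d = 12$ is achieved and not exceeded.

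Next I would handle the non-isomorphism of the two graphs and the inequivalence of the two codes. For graph non-isomorphism, the cleanest route is to exhibit an isomorphism invariant that separates them — for instance the spectrum of the adjacency matrix, which for MDC graphs can be written down explicitly in terms of characters of $\Z_2 \times \Z_{18}$ and the defining set $S$, or a simpler local invariant such as the number of triangles, the degree being $|S| = 13$ in both cases so that does not suffice. Computing eigenvalues or small-subgraph counts and showing they differ settles (i). For code inequivalence, I would recall that two self-dual additive $\F_4$ codes are equivalent iff their graphs are related by a sequence of local complementations (plus vertex relabelling), so in principle inequivalence of codes is stronger than non-isomorphism of graphs. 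The practical approach, which mirrors how \cite{seneviratne} and the classification literature proceed, is to compare a complete equivalence invariant: the full weight enumerator $\{W_0, \ldots, W_{36}\}$ of each code. If the weight distributions of $C_{36,1}$ and $C_{36,2}$ differ, the codes are inequivalent and we are done; I would report the two distinct weight enumerators obtained from the computation.

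The main obstacle is the computational one: confirming $d = 12$ for a length-$36$ self-dual additive code means searching a space of size $2^{36}$, which requires either a good minimum-weight algorithm or exploitation of the automorphism/block structure to prune the search, and likewise the full weight enumerator needed for the inequivalence argument is expensive to obtain by brute force. Everything else — self-duality, Type II, and the bijective setup for non-isomorphism — follows formally from the results already in the paper (the Schlingemann--Danielsen construction, Lemma~\ref{lem:size}, and Theorem~\ref{thm:nested}) together with elementary counting. I would also want to double-check that $\Gamma_{36,1}$ and $\Gamma_{36,2}$ are not related by one of the isomorphisms proved earlier (coordinate scaling by units, Theorem on $\sigma(S)$; the coprime reduction Theorem~\ref{thm:coprime} does not apply since $\gcd(2,18) = 2$), so that the non-isomorphism claim is not accidentally contradicted by an already-established symmetry — a quick check that no unit-scaling $\sigma$ carries one defining set to the other.
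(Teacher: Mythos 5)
Your proposal matches the paper's proof in all essentials: Type II follows from Lemma~\ref{lem:size} since $|S|=13$ is odd in both cases, the minimum distance is verified computationally, and inequivalence is read off from the weight distribution (the paper reports $W_{12}(C_{36,1})=28764$ versus $W_{12}(C_{36,2})=20844$), which as you note also settles graph non-isomorphism since isomorphic graphs yield equivalent codes. The only cosmetic difference is that the paper additionally cites Theorem~\ref{thm:meta} to identify these graphs with the metacirculant constructions of an earlier paper, which is a remark rather than a load-bearing step.
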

	
	\begin{proof}
		The MDC graphs $\Gamma_{36, 1}$ and $\Gamma_{36, 2}$ have valency $|S| = 13$ and by Lemma~\ref{lem:size}, \blue{the} self-dual additive codes $C_{36, 1}$ and $C_{36, 2}$ are Type $II$. 
		The weights $W_{12}(C_{36,1}) = 28764$ and $W_{12}(C_{36,2}) = 20844$ implying the two codes $C_{36, 1}$ and $C_{36, 2}$ are inequivalent. 
		By Theorem~\ref{thm:meta}, the MDC graphs $\Gamma_{36, 1}$ and $\Gamma_{36, 2}$ are isomorphic to metacirculant graphs $G_{36,1}$ and $G_{36, 2}$ respectively, obtained in~\cite{seneviratne}, Proposition 2. 
	\end{proof}
	
	\begin{figure}[ht]
		\centering
		\subcaptionbox*{i.}[.45\linewidth]{%
			\includegraphics[width=\linewidth]{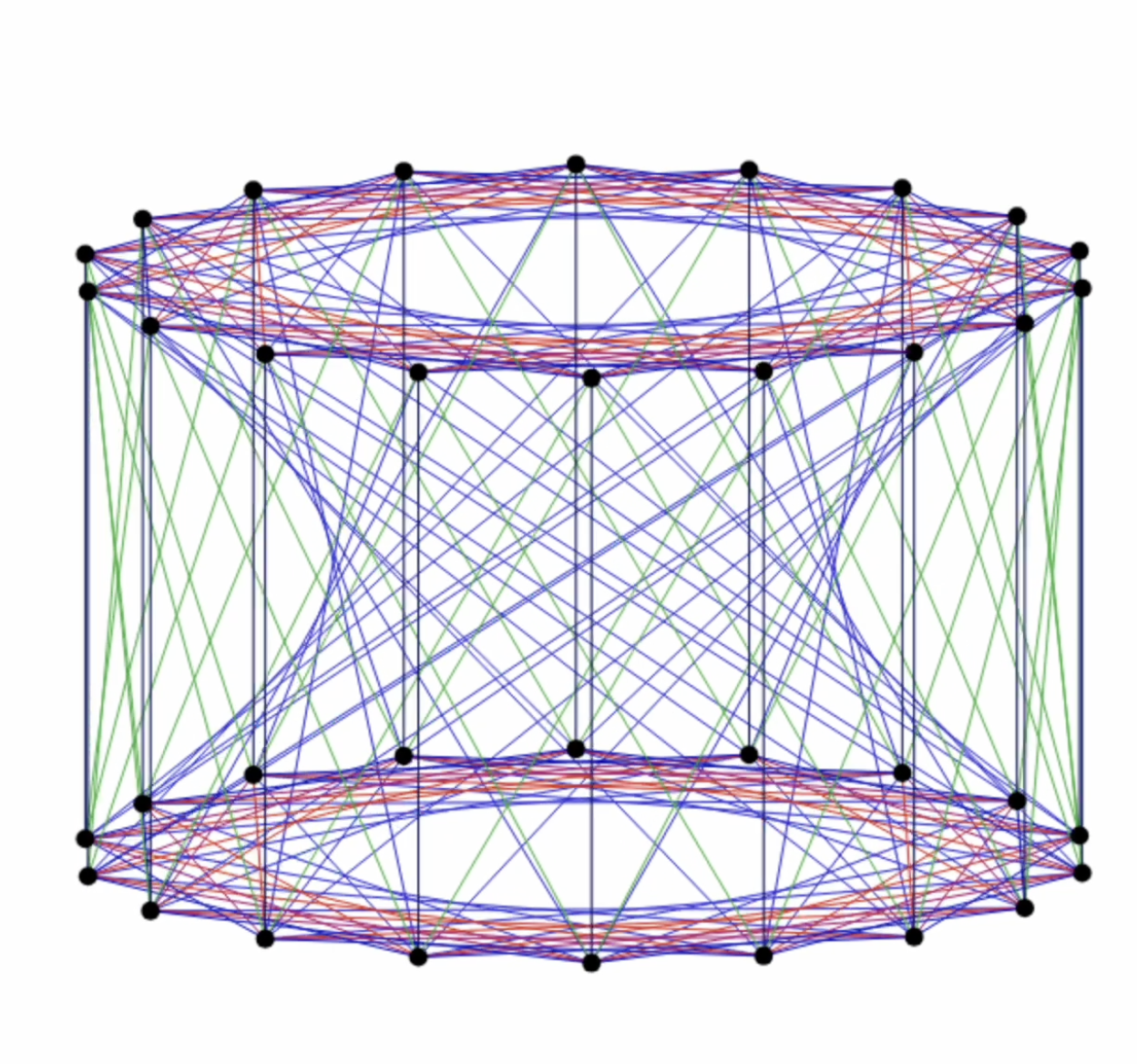}%
		}%
		\hfill
		\subcaptionbox*{ii.}[.45\linewidth]{
			\includegraphics[width=\linewidth]{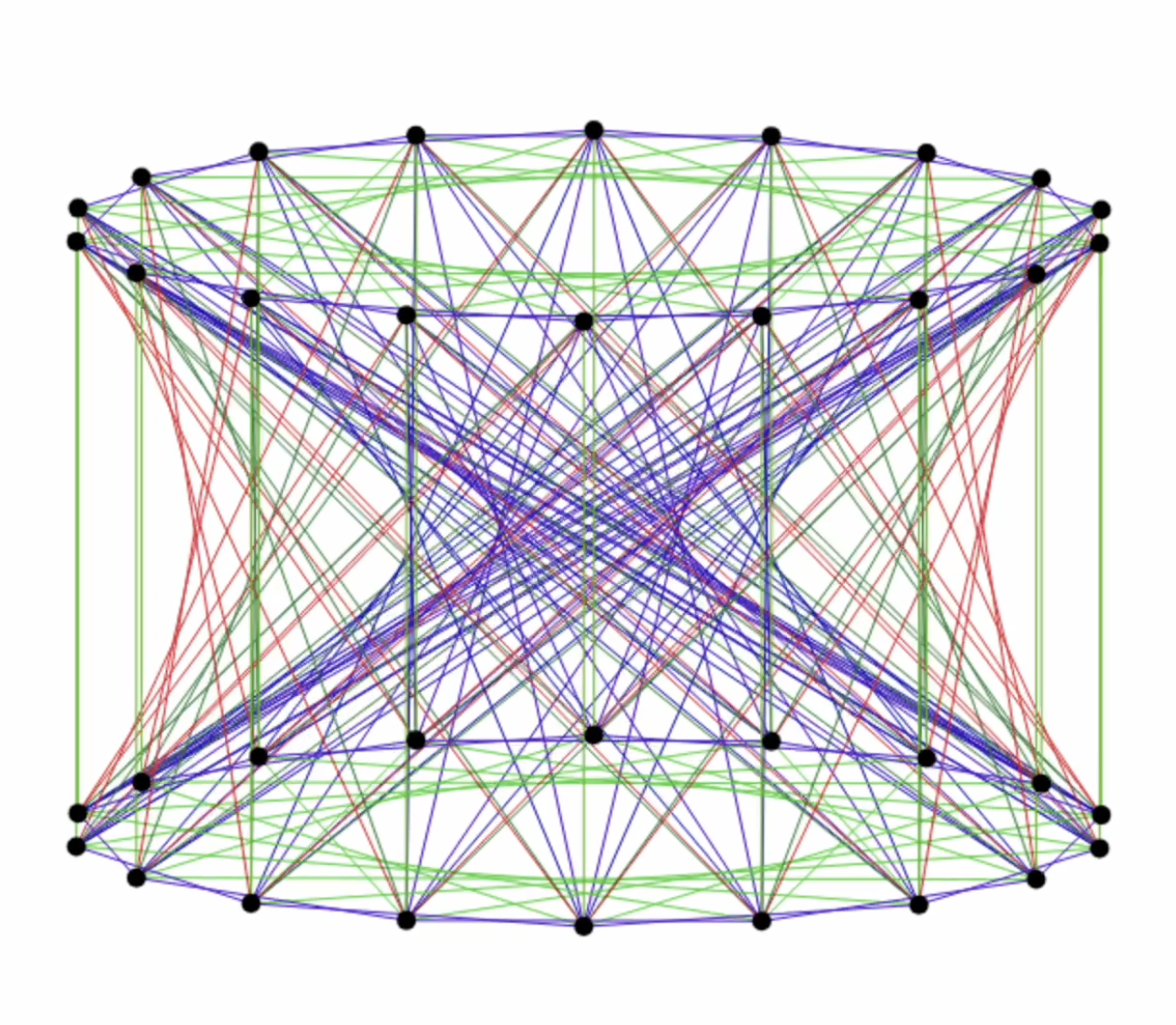}%
		}
		\caption{ i). MDC graph $\Gamma_{36, 1}$,  ii). MDC graph $\Gamma_{36, 2}$.}
	\end{figure}

	There is no circulant or cyclic construction for the optimal qubit code with parameters $\dsb{72, 0, 18}$. In Grassl's table~\cite{Grassl:codetables}, the optimal qubit code is listed by a stored generator matrix. We were able to obtain $16$ non-equivalent $\dsb{72, 0, 18}$ codes with the MDC construction. 
	
	\begin{proposition}
		There exists at least $16$ non-equivalent MDC based optimal qubit codes with parameters $\dsb{72, 0, 18}$. 
		The corresponding non-isomorphic MDC graphs have parameters
		$\Gamma_{72, i} =\Gamma({\bf N}, S_i ), \; 1\le i \le 16$, where 
		\begin{equation*}
			{\scriptsize
				\begin{split}
					S_1 &= [  \{ 2, 3, 5, 7, 8, 9, 10, 11, 12, 13, 15, 17, 18 \},
					\{ 1, 4, 8, 10, 11, 12, 13, 15, 18 \},
					\{ 1, 3, 17 \},	\{ 1, 2, 5, 7, 8, 9, 10, 12, 16 \}],\\
					S_2 &= [  \{ 3, 4, 6, 8, 12, 14, 16, 17 \},
					\{ 4, 5, 7, 13, 14, 16, 17 \},
					\{ 1, 6, 14 \},
					\{ 3, 4, 6, 7, 13, 15, 16 \}],\\
					S_3 &= [\{ 2, 4, 5, 6, 7, 13, 14, 15, 16, 18 \},
					\{ 1, 3, 4, 6, 7, 9, 10, 14, 15, 16, 17, 18 \},
					\{ 3, 5, 9, 10, 11, 15, 17 \},
					\{ 1, 2, 3, 4, 5, 6, 10, 11, 13, 14, 16, 17 \}]\\
					S_4  &=  [ \{ 5, 7, 8, 12, 13, 15 \},
					\{ 1, 2, 4, 5, 7, 8, 9, 12, 16, 18 \},
					\{ 2, 4, 5, 6, 8, 9, 10, 11, 12, 14, 15, 16, 18 \},
					\{ 1, 2, 4, 8, 11, 12, 13, 15, 16, 18 \}]	\\
					S_5 &= [  \{ 4, 5, 6, 8, 9, 11, 12, 14, 15, 16 \},
					\{ 3, 5, 6, 7, 8, 11, 12, 13, 14, 15, 16, 17 \},
					\{ 2, 3, 6, 9, 10, 11, 14, 17, 18 \},
					\{ 3, 4, 5, 6, 7, 8, 9, 12, 13, 14, 15, 17 \}]\\
					S_6 &= [\{ 2, 4, 6, 7, 9, 10, 11, 13, 14, 16, 18 \},
					\{ 2, 4, 7, 9, 11, 12, 16, 17 \},
					\{ 2, 5, 6, 8, 12, 14, 15, 18 \},
					\{ 3, 4, 8, 9, 11, 13, 16, 18 \}	]\\
					S_7 & = [\{ 5, 6, 7, 9, 11, 13, 14, 15 \},
					\{ 3, 6, 9, 11, 13, 16, 18 \},
					\{ 1, 2, 3, 4, 6, 7, 10, 13, 14, 16, 17, 18 \},
					\{ 2, 4, 7, 9, 11, 14, 17\}]\\
					S_8 & = [ \{ 2, 5, 6, 14, 15, 18 \},
					\{ 2, 3, 6, 11, 15, 16, 18 \},
					\{ 1, 6, 8, 12, 14 \},
					\{ 2, 4, 5, 9, 14, 17, 18 \} ]\\
					S_9 & =[ \{ 3, 4, 6, 7, 8, 9, 10, 11, 12, 13, 14, 16, 17 \},
					\{ 1, 4, 6, 7, 8, 10, 11, 12, 13, 14, 15, 17 \},
					\{ 7, 9, 11, 13 \},
					\{ 1, 3, 5, 6, 7, 8, 9, 10, 12, 13, 14, 16 \}]\\
					S_{10} & = [\{ 3, 4, 7, 8, 9, 11, 12, 13, 16, 17 \},
					\{ 2, 3, 4, 5, 8, 9, 13, 14, 15, 16, 17, 18 \},
					\{ 1, 4, 6, 7, 8, 9, 11, 12, 13, 14, 16 \},
					\{ 2, 3, 4, 5, 6, 7, 11, 12, 15, 16, 17, 18 \}]\\
					S_{11} & = [\{ 4, 8, 12, 16 \},
					\{ 1, 2, 3, 4, 6, 9, 10, 14, 17 \},
					\{ 2, 6, 7, 8, 9, 10, 11, 12, 13, 14, 18 \},
					\{ 1, 3, 6, 10, 11, 14, 16, 17, 18 \}]\\
					S_{12} & = [\{ 3, 4, 5, 6, 7, 10, 13, 14, 15, 16, 17 \},
					\{ 1, 2, 3, 4, 5, 12, 13, 16, 17 \},
					\{ 3, 4, 5, 15, 16, 17 \},
					\{ 1, 3, 4, 7, 8, 15, 16, 17, 18 \}]\\
					S_{13} &= [ \{ 2, 4, 6, 7, 8, 9, 11, 12, 13, 14, 16, 18 \},
					\{ 1, 2, 4, 6, 7, 8, 11, 16, 18 \},
					\{ 1, 2, 3, 4, 7, 8, 9, 11, 12, 13, 16, 17, 18\},
					\{ 1, 2, 4, 9, 12, 13, 14, 16, 18 \} ]\\
					S_{14} &= [\{ 2, 3, 4, 5, 7, 8, 9, 11, 12, 13, 15, 16, 17, 18 \},
					\{ 2, 3, 4, 5, 8, 10, 11, 12, 13, 14, 15, 16, 17, 18 \},
					\{ 2, 3, 4, 8, 9, 10, 11, 12, 16, 17, 18 \},
				 	\{ 2, 3, 4, 5, 6, 7, 8, 9, 10, 12, 15, 16, 17, 18 \}]\\
					S_{15} & = [ \{ 5, 6, 7, 9, 11, 13, 14, 15 \},
					\{ 5, 6, 7, 8, 9, 11, 12, 18 \},
					\{ 3, 10, 17 \},
					\{ 2, 8, 9, 11, 12, 13, 14, 15 \}]\\
					S_{16} & = [\{ 5, 6, 7, 9, 11, 13, 14, 15 \},
					\{ 1, 2, 4, 5, 6, 7, 9, 10, 11, 13, 14, 16, 18 \},
					\{ 3, 5, 7, 10, 13, 15, 17 \},
					\{ 1, 2, 4, 6, 7, 9, 10, 11, 13, 14, 15, 16, 18 \}] 
					\end{split} }
		\end{equation*}
		
	\end{proposition}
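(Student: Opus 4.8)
The plan is essentially computational verification organized around the tools already established in the paper. For each of the sixteen MDC graphs $\Gamma_{72,i}=\Gamma((2,2,2,9),S_i)$ (inferring $\mathbf{N}=(2,2,2,9)$ from the four support blocks, each lying in $\Z_{18}$), I would first construct the adjacency matrix $A(\Gamma_{72,i})$ using the nested block circulant structure guaranteed by Theorem~\ref{thm:nested}: the outer layers are $2\times 2$ block-circulant and the innermost layer is $9$-circulant, so the full $72\times 72$ matrix is determined by the data $S_i$ exactly as in the Remark following Example~\ref{eg:nested}. Then I would form the generator matrix $A(\Gamma_{72,i})+\omega I$ over $\F_4$, whose row span is, by the Schlingemann--Danielsen correspondence, a symplectic self-dual additive code $C_{72,i}$ of length $72$ and size $2^{72}$. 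Self-duality is automatic from the construction, so nothing needs to be checked there.

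The substantive steps are then: (1) verify that each $C_{72,i}$ has minimum distance exactly $18$, which certifies it as a $(72,2^{72},18)_4$ code and hence yields a $\dsb{72,0,18}$ qubit code matching the optimal entry in Grassl's table~\cite{Grassl:codetables}; and (2) verify that the sixteen codes are pairwise inequivalent. For (2), the cleanest invariant is the weight enumerator: I would compute the weight distributions $\{W_0,W_1,\ldots,W_{72}\}$ of the sixteen codes and exhibit that they are mutually distinct (or, where two enumerators happen to coincide, fall back on a finer invariant such as the orbit structure under the local-complementation / graph-isomorphism group, following Danielsen--Parker~\cite{Danielsen2006}). Since the statement also asserts the underlying graphs are non-isomorphic, I would additionally remark that distinct weight enumerators of the associated codes force the graphs to be non-isomorphic, because isomorphic graphs give equivalent codes.

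For step (1) the minimum-distance computation is the real bottleneck: a brute-force search over $2^{72}$ codewords is infeasible, so I would instead run a probabilistic/structured minimum-weight search (e.g.\ Brouwer--Zimmermann-style information-set decoding adapted to additive $\F_4$ codes, or the built-in routines of Magma/\texttt{GAP}'s \texttt{QDistRnd}) to obtain a codeword of weight $18$ together with a lower-bound certificate showing no codeword of weight $\le 17$ exists. The block-circulant symmetry from Theorem~\ref{thm:nested} can be exploited to cut the search: the automorphism group of $\Gamma_{72,i}$ acts on the code, so it suffices to search orbit representatives, and the $\Z_2^3\times\Z_9$ translation symmetry already gives a factor-$72$ reduction. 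I expect this minimum-weight certification — both producing the weight-$18$ vectors and ruling out lighter ones within reasonable time — to be the main obstacle; everything else (matrix construction, self-duality, Type determination via Lemma~\ref{lem:size} since $|S_i|$ data is recorded, and weight-enumerator comparison) is routine once the computations are set up.
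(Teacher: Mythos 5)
Your proposal matches the paper's proof in substance: the paper likewise relies on computational verification of the minimum distance and separates the sixteen codes by a weight-enumerator invariant (it uses just the single coefficient $W_{18}$, whose sixteen values are pairwise distinct, together with the valencies $|S_i|$ to distinguish some of the graphs directly). One small correction: the four support blocks with entries in $\Z_{18}$ indicate $n_1=4$ and $N_1=18$, i.e.\ $\mathbf{N}=(4,18)$, not $(2,2,2,9)$, which would yield only two first-level blocks of size $36$.
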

	
	\begin{proof}
		The graphs $\Gamma_{72, i}$ have valencies $[ 34, 25, 41, 39, 43, 35, 34, 25, 41, 45, 33, 35, 43, 53, 27, 41 ]$ for $1 \le i  \le16$ respectively. Therefore, graphs $\Gamma_1, \Gamma_2, \Gamma_3, \Gamma_4, \Gamma_5, \Gamma_6, \Gamma_{10}, \Gamma_{11}, \Gamma_{14}$, and $\Gamma_{15}$ are non-isomorphic. Let $C_{72, i}$ denote the corresponding additive self-dual codes from $\Gamma_{72, i}$ \; $1\le i \le 16$. Then $W_{18}(C_{72, i}) = [5760, 9108, 10404, 9768, 9244, 9684, 5028,8940, 9252, 9792, 9012, 8844, 7776,\\ 9336, 8256, 8064]$ implying codes $C_{72, i}$ are inequivalent for $1 \le i \le 16$.
	\end{proof}

	Further, there is no known circulant construction of the $\dsb{76, 0, 18}$ code. We were able to find $3$ inequivalent codes from the MDC construction.
	
	\begin{proposition}
		There are at least $3$ MDC based optimal non-equivalent qubit codes with parameters $\dsb{76, 0, 18}$. The corresponding MDC graphs are given by $\Gamma_{76, i} = \Gamma((2, 38), S_i), \; 1\le i \le 3$, where 
		\begin{equation*}
			{\scriptsize
				\begin{split}
					S_1   =  &[\{ 6, 8, 9, 10, 11, 29, 30, 31, 32, 34 \},
					\{ 3, 6, 9, 11, 13, 14, 16, 17, 20, 23, 24, 26, 27, 29, 31, 34, 37 \}]\\
					S_2  =  &[\{ 2, 3, 4, 5, 7, 8, 10, 11, 12, 13, 14, 15, 18, 22, 25, 26, 27, 28, 29, 30, 
					32, 33, 35, 36, 37, 38 \} ,\\
					&\{ 1, 2, 3, 4, 5, 6, 7, 8, 9, 11, 12, 13, 14, 15, 16, 17, 18, 22, 23, 24, 25,
					26, 27, 28, 29, 31, 32, 33, 34, 35, 36, 37, 38 \}]\\
					S_3  = &[\{ 2, 3, 5, 7, 8, 11, 12, 13, 15, 16, 17, 18, 19, 21, 22, 23, 24, 25, 27, 28,
					29, 32, 33, 35, 37, 38 \},\\
					&	\{ 1, 2, 4, 5, 6, 8, 11, 12, 13, 14, 15, 16, 17, 18, 22, 23, 24, 25, 26, 27, 
					28, 29, 32, 34, 35, 36, 38 \}]
				\end{split}
			}
		\end{equation*}
	\end{proposition}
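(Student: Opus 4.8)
The plan is to follow the computational template established in the proofs of the two preceding propositions. First I would confirm that each $\Gamma_{76,i}=\Gamma((2,38),S_i)$ is a well-defined MDC graph on $\Z_2\times\Z_{38}$: one checks that each listed defining set $S_i$ satisfies $S_i=-S_i$ and $\mathbf{0}\notin S_i$ (the first coordinate of each generator lies in $\Z_2$, hence is its own inverse, so the symmetry condition is purely a statement about the $\Z_{38}$-supports, which the displayed sets satisfy). By Schlingemann's construction, as used by Danielsen and Parker, the additive code $C_{76,i}$ spanned by the rows of $A(\Gamma_{76,i})+\omega I$ is then automatically symplectic self-dual of length $76$ and size $2^{76}$.

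Next I would record the valencies. Since the valency of $\Gamma((n_1,n_2),S)$ equals $|S|$ counted as a subset of $\Z_{n_1}\times\Z_{n_2}$, reading off the displayed lists gives $|S_1|=10+17=27$, $|S_2|=26+33=59$, and $|S_3|=26+27=53$. All three are odd, so by Lemma~\ref{lem:size} each $C_{76,i}$ is Type $II$ (consistent with $76$ being even); moreover the three valencies are pairwise distinct, so the three MDC graphs have different numbers of edges and are pairwise non-isomorphic.

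Then comes the substantive step: verifying that $d(C_{76,i})=18$ for $i=1,2,3$. The bound $d\le 18$ is immediate once one exhibits (or counts) a codeword of weight $18$; the content is the lower bound $d\ge 18$, i.e.\ that no nonzero codeword has weight at most $17$. I would establish this by a computer-assisted weight enumeration of each $C_{76,i}$, exploiting the nested block-circulant structure of $A(\Gamma_{76,i})$ guaranteed by Theorem~\ref{thm:nested} (and any automorphisms of $\Gamma_{76,i}$) to prune the search rather than passing naively over all $2^{76}$ codewords. Since the best minimum distance currently recorded for $\dsb{76,0,d}$ codes is $18$ and no circulant construction is known, any code attaining $d=18$ is optimal.

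Finally, to see that $C_{76,1},C_{76,2},C_{76,3}$ are pairwise inequivalent as additive codes --- which does \emph{not} follow merely from the graphs being non-isomorphic, since operations such as local complementation alter the graph while preserving code equivalence --- I would compute the number $W_{18}(C_{76,i})$ of minimum-weight codewords (and, if those coincide, further low-order coefficients of the weight enumerator) and check that the resulting values are distinct. The main obstacle throughout is purely computational: the minimum-distance certification is the bottleneck, and making it feasible hinges on using the block-circulant symmetry to reduce the codeword search to a tractable size.
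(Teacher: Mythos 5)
Your proposal is correct and follows the same computational template as the paper, but it is actually \emph{more} careful than the paper's own proof on one point. The paper's argument is only two sentences: it reads off the valencies of the three graphs (stated there as $27$, $29$, $53$; your count of $|S_2|=26+33=59$ appears to be the correct one, though either way the three values are pairwise distinct and odd) and then concludes both that the graphs are non-isomorphic \emph{and} that the resulting additive codes are non-equivalent. As you rightly observe, the second conclusion does not follow from the first: non-isomorphic graphs (e.g.\ related by local complementation) can generate equivalent self-dual additive codes, so graph non-isomorphism alone does not certify code inequivalence. Your proposed remedy --- computing $W_{18}(C_{76,i})$ and further weight-enumerator coefficients if needed, exactly as the paper itself does for the sixteen $\dsb{72,0,18}$ codes in the preceding proposition --- is the right fix and is consistent with the authors' own methodology elsewhere. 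Your remaining steps (checking $S_i=-S_i$ and $\mathbf{0}\notin S_i$, invoking the Schlingemann/Danielsen--Parker correspondence for self-duality, certifying $d=18$ by machine computation, and noting optimality against the tables) match what the paper does implicitly or relegates to {\tt MAGMA}.
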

	
	\begin{proof}
		The valence of the MDC graphs $\Gamma_{76,i}$ are $27, 29$, and $53$ respectively for $1\le i \le 3$. Hence, the graphs are non-isomorphic and the generated additive codes are non-equivalent. 
	\end{proof}

	Next, we introduce the two new qubit codes with lengths $77$ and $90$ obtained from the MDC construction. 
	
	\begin{proposition}
		The MDC graph $\Gamma_{77}$:=\mdc, where ${\bf N} = (7, 11)$ with the defining set $S = [s_1, s_2, \ldots, s_7]$, where 
		\begin{centering}
			$ s_1 = \{ 3, 4, 5, 6, 7, 8, 9, 10 \}$,
			$s_2 = \{ 1, 2, 4, 6, 10, 11 \}$,
			$s_3=\{ 1, 2, 3, 4, 5, 6, 9, 11 \}$,
			$s_4=\{ 1, 2, 4, 5, 6, 7, 9, 10, 11 \}$,
			$s_5=\{ 1, 2, 3, 4, 6, 7, 8, 9, 11 \}$,
			$s_6=\{ 1, 2, 4, 7, 8, 9, 10, 11 \}$, and 
			$s_7=\{ 1, 2, 3, 7, 9, 11 \}$
		\end{centering}
		generates a new $(77, 2^{77}, 19)_4$ additive self-dual code $C_{77}$. The corresponding qubit code $Q_{77}$ has parameters $\dsb{77, 0, 19} $ and exceeds the minimum distance of the best known qubit code listed on the Grassl's table~\cite{Grassl:codetables} by $1$. 
	\end{proposition}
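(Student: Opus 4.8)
The symplectic self-duality and the size $2^{77}$ come for free: once $\Gamma_{77}$ is known to be a genuine simple loopless graph, the Schlingemann--Danielsen--Parker correspondence \cite{schl,Danielsen2006} gives that the row span of $A(\Gamma_{77})+\omega I$ over $\F_4$ is a symplectic self-dual additive code, every symplectic self-dual additive code of length $77$ has size $2^{77}$, and the associated stabilizer code is $\dsb{77,0,d}$ with the same $d$ by \cite{Calderbank1998}. So the real work splits into three tasks: (i) check that $\mathbf N=(7,11)$ together with $S=[s_1,\dots,s_7]$ is admissible MDC data; (ii) compute the minimum weight of $C_{77}$ and show it is $19$; (iii) compare with the current record for $\dsb{77,0,d}$ in \cite{Grassl:codetables}.

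For (i) I would read $S$ through the block-support convention of the Remark following Example~\ref{eg:nested}. By Theorem~\ref{thm:nested}, $A(\Gamma_{77})$ is nested block circulant: a $7\times 7$ array of $11\times 11$ circulant blocks, the $j$-th block of the first block-row being the circulant whose first row is supported on $s_j$. The conditions $\mathbf 0\notin S$ and $S=-S$ then reduce to finitely many checks on the listed sets: $0\notin s_1$ and $s_1=-s_1$ in $\Z_{11}$ (so the diagonal block is a symmetric zero-diagonal circulant), and $-s_j=s_{2-j}$ with indices mod $7$ for $j=2,3,4$ (so block $j$ and block $2-j$ are transpose-paired). Granting these, $A(\Gamma_{77})$ is a bona fide adjacency matrix; note also that $|S|=\sum_j|s_j|=54$ is even, so $C_{77}$ is Type~I by Lemma~\ref{lem:size}.

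Task (ii) is the crux. Enumerating all $2^{77}\approx 1.5\times10^{23}$ codewords is out of the question, so I would exploit the symmetry. Since $\gcd(7,11)=1$, Theorem~\ref{thm:coprime} supplies an isomorphism of $\Gamma_{77}$ with a one-dimensional circulant graph on $77$ vertices; equivalently the coordinate-permutation group of $C_{77}$ contains a regular copy of $\Z_{77}$, so $C_{77}$ is permutation-equivalent to an additive code with cyclic (hence quasi-cyclic) coordinate structure. That is exactly the setting in which a Brouwer--Zimmermann-style minimum-weight routine is feasible: it enumerates low-weight combinations relative to a few cyclic shifts of an information set rather than the whole code, and such routines are available for additive $\F_4$-codes in standard computer-algebra systems. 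Running one returns $d(C_{77})$. To keep the answer verifiable I would also (a) display one explicit $\F_4$-combination of rows of $A(\Gamma_{77})+\omega I$ of Hamming weight $19$, which already gives $d\le 19$, and (b) check that the resulting weight enumerator $1+W_{19}z^{19}+\cdots$ is consistent with the Gleason-type constraints on Type~I self-dual additive $\F_4$-codes (invariance under the MacWilliams transform $C\mapsto C^{*}=C$), as an internal sanity check.

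The one genuine obstacle is the lower bound $d\ge 19$: ruling out every nonzero codeword of weight at most $18$. Its rigor rests wholly on a deterministic minimum-weight computation made tractable by the cyclic structure of Theorem~\ref{thm:coprime}; the Gleason-type shape of the weight enumerator merely corroborates it. Once $d(C_{77})=19$ is established, task (iii) is immediate: \cite{Grassl:codetables} records $18$ as the best known minimum distance for $\dsb{77,0,d}$, so $Q_{77}$ improves it by $1$, which proves the proposition.
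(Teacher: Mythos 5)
Your proposal is correct and takes essentially the same route as the paper: the paper's entire proof consists of noting that $\Gamma_{77}$ was found by randomized search and that the minimum distance of $C_{77}$ was verified computationally in {\tt MAGMA}, with self-duality and the $\dsb{77,0,d}$ parameters following automatically from the graph correspondence. Your extra scaffolding (the $S=-S$ admissibility checks, invoking Theorem~\ref{thm:coprime} to expose cyclic structure for the minimum-weight computation, and the weight-enumerator sanity check) is sound but goes beyond what the paper actually records.
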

	
	\begin{proof}
		The MDC graph $\Gamma_{77}$ and consequently, the additive code $C_{77}$ was found by a randomized search and the minimum distance of $C_{77}$ was verified using  {\tt MAGMA}.
	\end{proof}
	
	\begin{proposition}
		
		The MDC graph $\Gamma_{90} :=$ \mdc, where ${\bf N} = (9, 10)$ with the defining set $S = [s_1, s_2, \ldots, s_9]$ of vectors, where 
		\begin{centering}
			$s_1 = \{ 2, 3, 4, 5, 6, 7, 8, 9, 10 \}$,
			$s_2= \{ 1, 3, 4, 7, 9, 10 \}$,
			$s_3 = \{ 1, 2, 4, 6, 7, 8, 9 \}$,
			$s_4 = \{ 1, 3, 6, 8, 9, 10 \}$,
			$s_5 = \{ 1, 3, 5, 8, 9, 10 \}$,
			$s_6= \{ 1, 2, 3, 4, 7, 9 \}$,\\
			$s_7 = \{ 1, 2, 3, 4, 6, 9 \}$,
			$s_8=\{ 1, 3, 4, 5, 6, 8, 10 \}$, and
			$s_9 = \{ 1, 2, 3, 5, 8, 9 \}$
		\end{centering}
		produces a new $(90, 2^{90}, 22)_4$ Type $II$ additive self-dual code $C_{90}$. The new $\dsb{90, 0, 22}$ qubit code $Q_{90}$ has better minimum distance than the best known qubit code with parameters 
		$\dsb{90, 0, 21}$ given in~\cite{seneviratne}. 
		
	\end{proposition}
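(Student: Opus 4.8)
The plan is to argue exactly as in the two preceding propositions, with one extra ingredient (the Type~II check) relative to the $\dsb{77,0,19}$ case. First, self-duality comes for free: by the Schlingemann--Danielsen correspondence recalled in Section~\ref{sec:additive}, the code $C_{90}$ is the $\F_4$-row span of $A(\Gamma_{90})+\omega I$, where $A(\Gamma_{90})$ is the (symmetric) adjacency matrix of $\Gamma_{90}$, and any code of this form is symplectic self-dual. Hence $C_{90}$ is a $(90,2^{90},d)_4$ additive self-dual code for some $d$, and it produces a $\dsb{90,0,d}$ qubit code $Q_{90}$; it then remains to pin down the type and the value of $d$.

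Second, I would verify that $C_{90}$ is Type~II via Lemma~\ref{lem:size}. By the remark following Example~\ref{eg:nested}, the nine listed sets $s_1,\dots,s_9$ are precisely the supports of the first rows of the nine first-row blocks in the nested block circulant decomposition of $A(\Gamma_{90})$ provided by Theorem~\ref{thm:nested} (here $\mathbf N=(9,10)$, so $N_1=10$ and each block is a $10\times10$ circulant). Since $\Gamma_{90}$ is a Cayley graph it is regular, and counting the ones in the first row of $A(\Gamma_{90})$ gives valency $|S|=\sum_{i=1}^{9}|s_i|=9+6+7+6+6+6+6+7+6=59$, which is odd; Lemma~\ref{lem:size} then yields that $C_{90}$ is Type~II, so all of its weights are even, which is consistent with $d=22$.

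Third, the equality $d(C_{90})=22$ is certified computationally, exactly as in the $\dsb{77,0,19}$ proposition: the defining set $S$ (equivalently $\Gamma_{90}$) is produced by a randomized search over MDC graphs on $90$ vertices with $\mathbf N=(9,10)$, and one then computes with {\tt MAGMA} that $W_j(C_{90})=0$ for $1\le j\le 21$ while $W_{22}(C_{90})\ne 0$. Comparing $d=22$ against the best previously recorded parameters $\dsb{90,0,21}$ of~\cite{seneviratne} (and against the table~\cite{Grassl:codetables}) shows that $Q_{90}$ beats the prior record by one, hence is genuinely new.

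The only real obstacle is the size of that last computation: confirming that a length-$90$, $2^{90}$-element additive code has no nonzero codeword of weight below $22$ is a nontrivial minimum-distance determination, and the substantive content of the proof is that {\tt MAGMA} --- exploiting the automorphisms of $C_{90}$ inherited from the nested circulant block structure to prune the enumeration --- verifies this lower bound.
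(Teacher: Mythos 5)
Your proposal is correct and follows essentially the same route as the paper: self-duality from the graph--code correspondence, Type~II from Lemma~\ref{lem:size} since $|S|=\sum_i|s_i|=59$ is odd, and the minimum distance $d=22$ certified by a {\tt MAGMA} computation on the code produced by the randomized search. The paper's proof is exactly this (it additionally notes that the search ranged over the symbol sets $(2,45),(3,30),(5,18),(6,15),(9,10)$), so no further comparison is needed.
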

	
	\begin{proof}
		A randomized search among all symbol sets ${\bf N} = (2, 45), (3, 30), (5, 18), (6, 15)$ and $(9, 10)$ yielded the desired graph $\Gamma_{90}$ with the additive code $C_{90}$.  {\tt MAGMA}
		was used to verify the minimum distance of $C_{90}$. The code is Type $II$, since $|S| = 59$.
	\end{proof}
	

	\section{Concluding remarks}
	We have shown that MDC graphs have similar properties to circulant graphs and are as effective as circulant graphs with regard to obtaining optimal self-dual additive codes. Computationally, we observed that two-dimensional MDC graphs produced codes with higher minimum distances than 
		that of higher dimensional MDC graphs. This drawback may be due to the fact that  adjacency matrices of MDC graphs have  nested block circulant structures. 
	
\section*{Supplementary material}	

\begin{enumerate}
	\item MDC graphs $\Gamma_{77}$ and $\Gamma_{90}$, and their corresponding additive codes can be explicitly constructed by running {\tt MDB77Test.m} and {\tt MDB90Test.m}.
	\item The certificates of minimum distance computations for $\dsb{77, 0, 19}$ and $\dsb{90, 0, 22}$ are labeled {\tt New77output.txt }  and {\tt New90output.txt}.
\end{enumerate}

	\bibliographystyle{plain}
	
\end{document}